\theoremstyle{definition}
\theoremstyle{remark}
\newtheorem{remark}{Remark}
\theoremstyle{plain}
\newtheorem{theorem}{Theorem}
\theoremstyle{plain}
\theoremstyle{plain}
\theoremstyle{plain}
\begin{document}


\title{Non-perturbative guiding center model for magnetized plasmas}



\author{J. W. Burby}
\affiliation{Department of Physics and Institute for Fusion Studies, The University of Texas at Austin, Austin, TX 78712, USA}
\author{I. A. Maldonado}
\affiliation{Department of Physics, The University of Texas at Austin, Austin, TX 78712, USA}
\author{M. Ruth}
\affiliation{Department of Physics and The Oden Institute for Computational Engineering and Sciences, The University of Texas at Austin, Austin, TX 78712, USA}
\author{D. A. Messenger}
\affiliation{Theoretical Division, Los Alamos National Laboratory, Los Alamos, NM, 87545, USA}
\author{L. Carbajal}
\affiliation{Type One Energy Group Inc., Knoxville, TN 37931, USA}


\date{\today}

\begin{abstract}
Perturbative guiding center theory adequately describes the slow drift motion of charged particles in the strongly-magnetized regime characteristic of thermal particle populations in various magnetic fusion devices. However, it breaks down for particles with large enough energy. We report on a data-driven method for learning a non-perturbative guiding center model from full-orbit particle simulation data. We show the data-driven model significantly outperforms traditional asymptotic theory in magnetization regimes appropriate for fusion-born $\alpha$-particles in stellarators, thus opening the door to non-perturbative guiding center calculations.
\end{abstract}

\pacs{}

\maketitle 


{ Successful design of a stellarator reactor requires optimizing fusion-born $\alpha$-particle confinement to sustain fusion burn and minimize damage to plasma facing components. Stellarator design efforts therefore require an efficient model for $\alpha$-particle dynamics.
State-of-the-art stellarator optimization routines employ the traditional guiding center model \cite{Kruskal_1962,Littlejohn_1981,Littlejohn_1983,Littlejohn_1984,Cary_2009,Burby_loops_JMP}, which eliminates the short cyclotron timescale and provides the theoretical basis for advanced confinement concepts such as quasisymmetry \cite{Bo2_1983,NZ_1988,Burby_phase_2013,Helander_review_2014,BKM_2020,Rodriguez_2020}, omnigeneity \cite{Hall_1975,Helander_review_2014,Landreman_2012,Parra_2015}, quasi-isodynamicity \cite{QI_2023}, and isodrasticity \cite{BMN_2023}. This approximation assumes $\epsilon = \rho_0/L_0\ll 1$, where $\rho_0$ denotes a characteristic particle gyroradius and $L_0$ denotes a characteristic length scale for the magnetic field $\bm{B}$. When $\epsilon$ is \emph{not} sufficiently small, traditional guiding center predictions may deviate significantly from those of full-Lorentz-force modeling, leading to imprecise estimates of particle confinement. Because the gyroradius scales linearly with particle speed, energetic $\alpha$-particles may fall in this problematic range even when thermal particles do not.  Accuracy of the traditional guiding center model should be critically evaluated {  in any future stellarator design study.}} 

\begin{figure*}[htpb]
	\centering
    \includegraphics[width=1.0\linewidth]{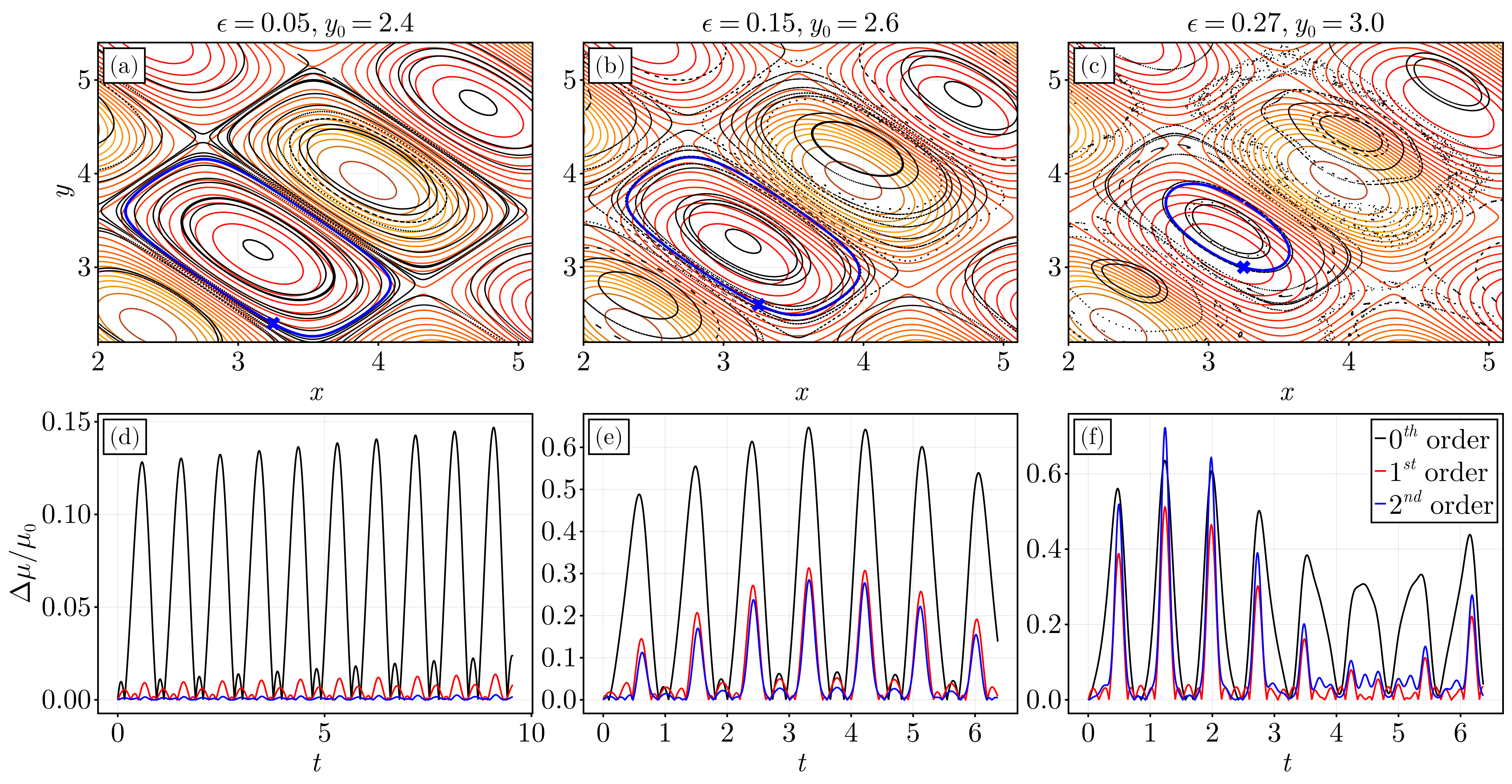}
	\caption{ (Top) Poincar\'e sections ($z=v_z=v_y=0, v_x>0$) of the full-orbit dynamics of Eq.~\eqref{eq:B} (black) and level sets of an adiabatic invariant truncation (color) {  initialized with $v_x=\epsilon$ for different values of $\epsilon$. Chaos appears in lobes with smaller field strength due to larger local-$\epsilon$.} { Lowest-order truncation is used in (a-c), consistent with guiding center models used in practical calculations. $y_0$ denotes $y$-coordinate of blue cross.} (Bottom) Relative error time traces of three truncations of adiabatic invariant series for the same $\epsilon$ values. Blue crosses denote trajectories underlying (d-f). 
 }
	\label{fig:epsilons}
\end{figure*}


{ Recent numerical studies \cite{Assuncao_2023,Rodrigues_2024,Carbajal_2024} \footnote{Ref.\,\cite{Carbajal_2024} is available upon request. Contact: L. Carbajal, \texttt{leo.carbajal@typeoneenergy.com}.} find significant deviations in computed $\alpha$-particle loss fractions in various reactor-scale stellarator configurations when using full-orbit vs traditional guiding center particle pushers. {{ These and other calculations} indicate that, depending on the stellarator configuration under study, guiding center predictions can either overestimate or underestimate the energetic particle loss fraction computed from full-orbit computations. While some configurations show no discrepancies at all, many lead to differences greater than $60\%$ of the guiding center prediction}. Quasisymmetric designs tend to show greater discrepancies than quasi-isodynamic designs. Single trajectory comparisons \cite{Assuncao_2023} reveal particles with velocity vectors nearly perpendicular to the magnetic field account for the bulk of these discrepancies. Current designs of some optimized stellarator fusion power plants \cite{Hegna_2025,Guttenfelder_2025} show good confinement of fusion-born alpha-particles with less than few percent of alpha-particle energy losses. Nevertheless, discrepancy between full-orbit and guiding-center estimates of alpha-particle confinement is still observed \cite{Carbajal_2025}, suggesting improvements over guiding center are still necessary for high-fidelity evaluations.} 

{ Such deviations may appear surprising because estimating $\epsilon$ for $\alpha$-particles in existing stellarator reactor concepts using nominal minor radius as magnetic scale length $L_0$ leads to $\epsilon\sim 2\text{-}3\%$. However, modern stellarator designs often feature highly elongated and curved poloidal cross sections. The \emph{local} $\epsilon$ in those regions may exceed the nominal value by a factor of ${\sim}\,5$, leading to a bleaker view of guiding center validity. 
For example, detailed phase portraits of $\alpha$-particle trajectories in the $2D$ magnetic field $\bm{B} = B(x,y)\,\bm{e}_z$, with
\begin{gather}
\label{eq:B}
    B(x,y)  = 1 + \sum_{i=1}^2a_i\,\cos(k_{xi}\,x + k_{yi}\,y),\\
\nonumber
    k_{x1} =k_{y2} =3,\quad k_{y1} = k_{x2} = 1,\quad 
    a_1 = a_2 = 0.3,
\end{gather}
reveal significant shifts in particle drift surfaces away from traditional guiding center predictions for $\epsilon \sim 15\%$ (note that $B$ is scaled so $L_0=1$).  See Fig.~\ref{fig:epsilons}. Since bounce motion and field line curvature play no role in this field, the quality of traditional guiding center predictions at $\epsilon\sim 15\%$ should only degrade further in a fully $3D$ configuration. Worse still, Fig.~\ref{fig:epsilons} reveals that including higher-order perturbative corrections \cite{Burby_gc_2013} to the traditional guiding center model, which represents a low-order truncation, offers limited accuracy improvements, if any. This signifies collapse of the optimal truncation order \cite{Berry_1991} for the adiabatic invariant asymptotic series. These observations motivate development of a non-perturbative variant of guiding center theory to replace traditional guiding center when it fails. } 

This Letter describes a non-perturabative guiding center model suitable for $\alpha$-particles in stellarators,{  runaway electrons in tokamaks \cite{Carbajal_2017,Carbajal_2020}, and magnetically-confined energetic particles \cite{Ogawa_2016} produced by neutral beams \cite{Pefferle_2015}, ion cyclotron resonance heating, or fusion reactions \cite{Cecconello_2018} more broadly.} { First we deduce formally-exact non-perturbative guiding center equations of motion assuming a hidden symmetry with associated conserved quantity $\mathcal{J}$. We refer to $\mathcal{J}$ as the non-perturbative adiabatic invariant.} These equations are completely determined by first-order derivatives of $\mathcal{J}$ and the magnetic field. They enjoy Hamiltonian structure comparable with that of asymptotic theory \cite{Littlejohn_1981}. Then we describe a data-driven method for learning $\mathcal{J}$ from a dataset of full-orbit $\alpha$-particle trajectories. We apply this method to the $\alpha$-particle dynamics shown in Fig.~\ref{fig:epsilons} and find the learned non-perturbative guiding center model significantly outperforms the standard guiding center expansion.  { Our proposed method for learning $\mathcal{J}$ applies on a per-magnetic field basis; changing $\bm{B}$ requires re-training. This makes it well-suited to stellarator design assessment tasks, such as $\alpha$-loss-fraction uncertainty quantification.} 


In a seminal paper \cite{Kruskal_1962}, M. Kruskal showed traditional guiding center theory originates from a hidden perturbative $U(1)$-symmetry of the ordinary differential equations describing single-particle motion, $X = \dot{\bm{x}}\cdot\partial_{\bm{x}} + \dot{\bm{v}}\cdot\partial_{\bm{v}}$, where $\dot{\bm{v}} = \bm{v}\times\bm{B} $ and $\dot{\bm{x}} = \epsilon\,\bm{v}$. { Here $U(1)$-symmetry refers to a periodic one-parameter group of phase space transformations that maps solutions of the Lorentz force equations into other solutions. At leading order, the $U(1)$-symmetry is concretely expressed via gyrophase translations.} Kruskal dubbed the infinitesimal generator of the perturbative hidden symmetry the roto rate $R$ and showed that its formal expansion in powers of $\epsilon$ is uniquely determined to all orders, with first term given by $R_0 = \bm{v}\times\bm{b}\cdot\partial_{\bm{v}}$. 

Our non-perturbative guiding center model assumes (I) existence of a non-perturbative $U(1)$-symmetry with infinitesimal generator $\mathcal{R}$ such that $\mathcal{R}|_{\epsilon = 0} = R_0$. We refer to streamlines of $\mathcal{R}$ as $U(1)$-orbits. Here symmetry means the non-perturbative roto rate $\mathcal{R}$ is a Hamiltonian vector field with Hamiltonian function $\mathcal{J}$ that commutes with the kinetic energy $E = |\bm{v}|^2/2$ under the Lorentz force Poisson bracket,
\begin{align*}
\{F,G\} = \bm{B}\cdot \partial_{\bm{v}}F\times\partial_{\bm{v}}G + \epsilon\,(\partial_{\bm{x}}F\cdot\partial_{\bm{v}}G -\partial_{\bm{x}}G\cdot\partial_{\bm{v}}F),
\end{align*}
i.e. $\{E,\mathcal{J}\} = 0$. We refer to $\mathcal{J}$ as the non-perturbative adiabatic invariant.

To ensure topological similarity with the known $U(1)$-symmetry at $\epsilon = 0$, we also assume (II) each $U(1)$-orbit generated by $\mathcal{R}$ uniquely and transversally intersects a certain phase space Poincar\'e section $\Sigma$ defined as follows. Introduce phase space coordinates $(\bm{x},v_\parallel,v_\perp,\zeta)$ such that $\bm{v} = v_\parallel\,\bm{b} + v_\perp\,(\cos\zeta\bm{e}_1 + \sin\zeta\bm{e}_2)$. Here $\bm{e}_1,\bm{e}_2$ are unit vector fields chosen to ensure $(\bm{e}_1,\bm{e}_2,\bm{b})$ is a right-handed orthonormal frame. Then set $\Sigma = \{(\bm{x},v_\parallel,v_\perp,0)\}$. We refer to $\Sigma$ as the guiding center Poincar\'e section. 


Intrinsically, a particle's guiding center is equal to the $U(1)$-orbit passing through that particle's phase space location. Guiding center phase space is therefore intrinsically the collection of all $U(1)$-orbits. Constructing a guiding center model, perturbative or non-perturbative, requires equipping this space with coordinates and identifying an evolution law for $U(1)$-orbits in that coordinate system. Traditional guiding center theory painstakingly constructs such coordinates order-by-order in $\epsilon$, resulting in non-unique model equations \cite{Parra_Calvo_2014} with exploding complexity that cannot be applied in the non-perturbative regime identified in Fig. \ref{fig:epsilons}. 

Breaking with tradition, we label $U(1)$-orbits by assigning to each orbit its point of intersection with $\Sigma$, $(\bm{x},v_\parallel,v_\perp,\zeta) = (\bm{X},u,r,0)$, or footpoint.  In this manner we regard $\Sigma$ as a concrete realization of the abstract space of $U(1)$-orbits. A particle's footpoint is just the footpoint of the $U(1)$-orbit passing through that particle's phase space location. We denote the footpoint of a particle at $(\bm{x},v_\parallel,v_\perp,\zeta)$ as $\pi(\bm{x},v_\parallel,v_\perp,\zeta) = (\bm{X},u,r)$. Note that $\pi(\bm{x},v_\parallel,v_\perp,0) = (\bm{x},v_\parallel,v_\perp)$, indicating that a particle's footpoint coincides with its phase space location whenever the particle's trajectory intersects $\Sigma$, which happens once each gyroperiod. We define the guiding center of any particle as that particle's footpoint on $\Sigma$.





\begin{figure*}[htpb]
	\centering
        \includegraphics[width=\linewidth]{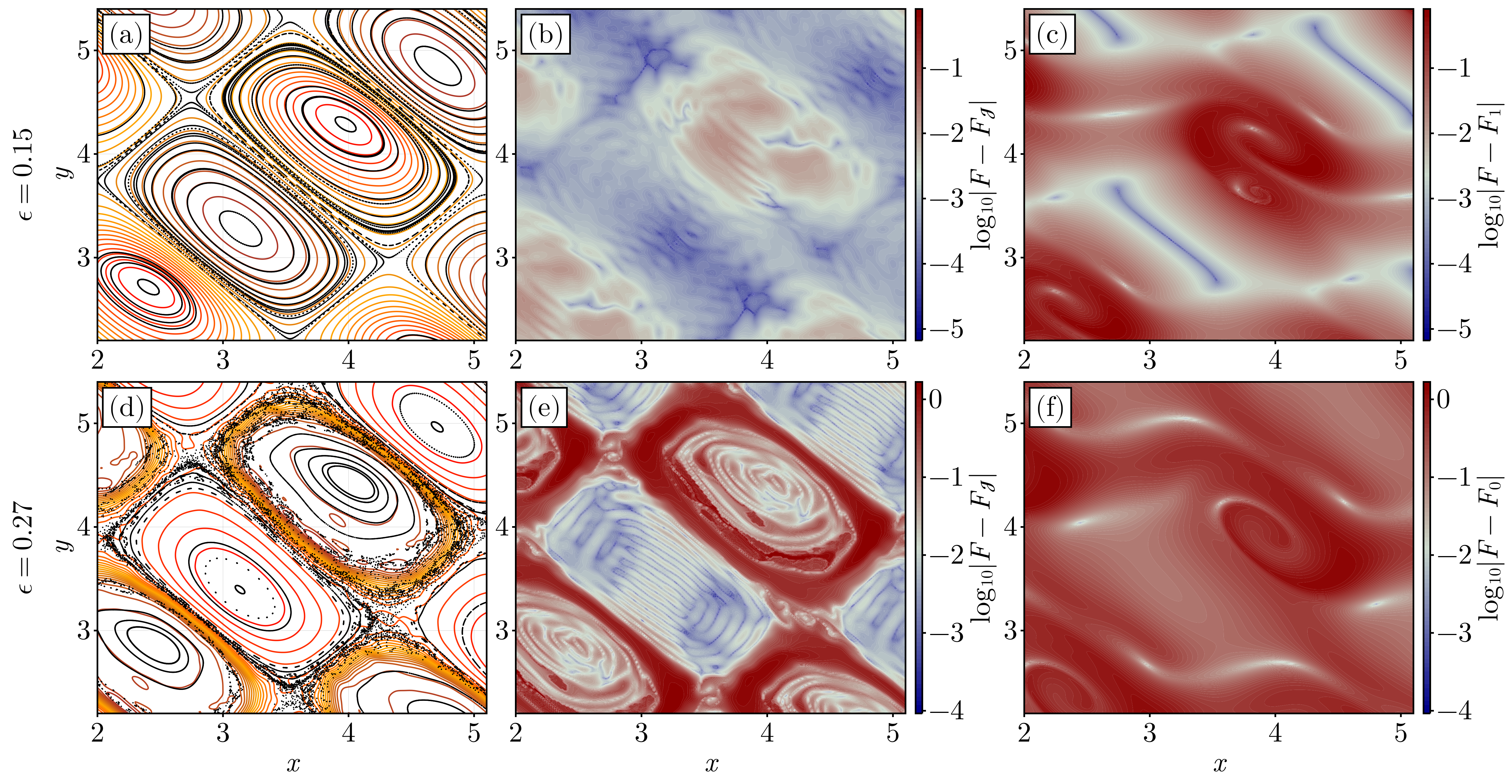}
	\caption{(a,d) Poincar\'e sections of the full-orbit dynamics (black) and level sets of the learned $\mathcal{J}_{\Sigma}$ (color) for $\epsilon \in \{0.15,0.27\}$ (cf.~Fig.~\ref{fig:epsilons} (b,c)). (b,e) Error of $\mathcal{J}_\Sigma$ prediction of the dynamics from the full-orbit dynamics over a gyroperiod. (c,f) Error of the order-$j$ truncated dynamics \eqref{perturb_Jsig} over a gyroperiod with $j=(1,0)$. { The error is computed on a uniform $201\times201$ point grid --- a higher resolution than the training data.}}
	\label{fig:Js}
\end{figure*}

As a particle moves through phase space its footpoint traces a curve on $\Sigma$. Because the Lorentz force equations of motion are $U(1)$-invariant by hypothesis that curve is a streamline for a uniquely defined vector field on $\Sigma$ that we call the footpoint flow field, $X_\Sigma = \dot{\bm{X}}\cdot\partial_{\bm{X}} + \dot{u}\,\partial_u + \dot{r}\,\partial_r$. { For a proof, see Theorem 1 in supplementary material.} 
The components of the footpoint flow field define our non-perturbative guiding center equations of motion. We identify explicit formulas for these components involving only $\mathcal{J}$, $\bm{B}$, and the unit vectors $\bm{e}_1,\bm{e}_2$ as follows. { See Figure 1 in the supplementary material for an illustration of the phase space geometry of footpoint flow}.

Any pair of functions $f,g$ on $\Sigma$ determines a pair of $U(1)$-invariant functions $F = f\circ\pi$, $G = g\circ\pi$ on all of phase space by precomposing with the footpoint map $\pi$. Because $\mathcal{R}$ is a Hamiltonian vector field the function $\{F,G\}$ is also $U(1)$-invariant. It follows that the restriction $\{f,g\}_\Sigma = \{F,G\}\mid\Sigma$ defines a Poisson bracket on $\Sigma$. This bracket, together with the restricted kinetic energy, $E_\Sigma = E\mid \Sigma = u^2/2 + r^2/2$, determines a Hamiltonian system on $\Sigma$. {Theorem 2 in supplementary material shows this system coincides with the footpoint flow field.} Thus, components of the footpoint flow field $X_\Sigma$ may be computed explicitly given an explicit formula for the Poisson bracket $\{f,g\}_\Sigma$. This bracket can be expressed entirely in terms of $\mathcal{J}$, $\bm{B}$, and the unit vectors $\bm{e}_1,\bm{e}_2$. The general formula is contained in the proof of Theorem 2 in supplementary material, which includes Ref.\,\cite{Guillemin_Sternberg_1980}. For the $4D$ $(x,y,v_\perp,\zeta)$-phase space appropriate for magnetic fields of the form $\bm{B} = B(x,y)\,\bm{e}_z$ the result is 
\begin{align*}
        \{f,g\}_\Sigma &=\epsilon\,(\partial_{X} f\,\partial_{r}g -\partial_{r}f\,\partial_{X} g )\\
        & + \epsilon\,(\partial_{X}\mathcal{J}_\Sigma \partial_{r}f - \partial_{r}\mathcal{J}_\Sigma \partial_{X}f)\frac{ \partial_{r}g + \epsilon\,B^{-1}\, \partial_{Y}g}{  \partial_{r}\mathcal{J}_\Sigma+\epsilon\,B^{-1}\,\partial_{Y}\mathcal{J}_\Sigma}\\
        & - \epsilon\,(\partial_{X}\mathcal{J}_\Sigma \partial_{r}g - \partial_{r}\mathcal{J}_\Sigma \partial_{X}g)\frac{  \partial_{r}f+\epsilon\,B^{-1}\, \partial_{Y}f}{  \partial_{r}\mathcal{J}_\Sigma+\epsilon\,B^{-1}\,\partial_{Y}\mathcal{J}_\Sigma}.
\end{align*}

{Theorem 3 in supplementary material gives explicit formulas for the non-perturbative guiding center equations of motion for general $\bm{B}$.} For $\bm{B}= B(x,y)\,\bm{e}_z$, the results simplify to $\dot{X} = U/D, \dot{Y} = V/D, \dot{r} = 0$, where
\begin{gather}
U = -\epsilon^2\,\partial_{Y}\mathcal{J}_\Sigma,\quad V = \epsilon^2\,\partial_{X}\mathcal{J}_\Sigma,\label{NPGC_UV}\\
D = (B/r)\,(\partial_{r}\mathcal{J}_\Sigma+\epsilon\,B^{-1}\, \partial_{Y}\mathcal{J}_\Sigma).\label{NPGC_D}
\end{gather}
Here $\mathcal{J}_\Sigma = \mathcal{J}\mid \Sigma$ denotes restriction of the non-perturbative adiabatic invariant to the guiding center Poincar\'e section. These non-perturbative guiding center evolution laws are exact granted existence of the non-perturbative $U(1)$-symmetry generated by $\mathcal{R}$.

Although formulas \eqref{NPGC_UV}-\eqref{NPGC_D} reveal the central role played by the adiabatic invariant in guiding center modeling, they cannot be numerically simulated  without an expression for $\mathcal{J}_\Sigma$. In the perturbative regime, $\epsilon\ll 1$, truncations of the magnetic moment asymptotic series, such as the second-order result \cite{Burby_gc_2013} for $\bm{B} = B(x,y)\,\bm{e}_z$,
\begin{multline}
\label{perturb_Jsig}
    \mathcal{J}_\Sigma  = \frac{r^2}{2 B} - \epsilon\,\frac{r^3\,\partial_yB}{2B^3} + \epsilon^2 \frac{r^4}{16B^{5}} \left(3\left[\left(\partial_x B\right)^2+5\left(\partial_y B\right)^2\right]-B\left(\partial_x^2 B + 5\partial_y^2 B\right)\right),
\end{multline}
provide this missing ingredient. 
On the other hand, Fig.~\ref{fig:epsilons} demonstrates inadequacy of truncated series representations in the non-perturbative regime.


We instead choose to learn $\mathcal{J}_{\Sigma}$ directly from trajectories of the Lorentz force equations for \eqref{eq:B} by minimizing the Rayleigh quotient
\begin{equation}
\label{eq:minJ}
    \min_{\mathcal{J}_{\Sigma}} \frac{R_{\mathrm{Dyn}}^2(\mathcal{J}_{\Sigma}) + R_{\mathrm{Inv}}^2(\mathcal{J}_{\Sigma})}{\left\lVert\mathcal{J}_{\Sigma}\right\rVert^2_{L^2(\Omega)}}, \quad \mathrm{s.t. } \int_{\Omega}\mathcal{J}_{\Sigma} \, \mathrm{d}^3\bm{r} = 0,
\end{equation}
where $\mathcal{J}_{\Sigma}$ is discretized by $39\times39\times2$ Fourier by Fourier by Chebyshev modes in $\bm{r} = (x,y,r) \in \Omega = \Omega_{xy}\times[\sqrt{2}-0.01, \sqrt{2}+0.01]$ and $\Omega_{xy}$ is a periodic cell of the magnetic field. Here $\lVert \mathcal{J}_{\Sigma} \rVert_{L^2(\Omega)}^2 = \frac{1}{N}\sum_i\mathcal{J}_{\Sigma}(\bm{r}_i)^2$,
where the $\bm{r}_i$ are sampled from a $79\times79\times3$ Fourier by Fourier by Chebyshev-Lobatto quadrature grid on $\Omega$ and $N=51^2\times9$. 
The first residual
\begin{multline*}
R_{\mathrm{Dyn}}^2 = \frac{1}{\hat{N}}\sum_i[(D(\hat{\bm{r}}_i;\mathcal{J}_{\Sigma})\,\dot{X}_i - U(\hat{\bm{r}}_i;\mathcal{J}_{\Sigma}))^2  +(D(\hat{\bm{r}}_i;\mathcal{J}_{\Sigma})\,\dot{Y}_i^n - V(\hat{\bm{r}}_i;\mathcal{J}_{\Sigma}))^2]
\end{multline*}
penalizes differences between predictions from $\mathcal{J}_{\Sigma}$ and full-orbit trajectories on the Poincar\'e section. A similar objective function appeared previously in \cite{Messenger_2024}, where a parametric averaged Hamiltonian appeared in place of our parametric $\mathcal{J}_{\Sigma}$. 
For the sum we initially sample $1000$ points $\hat{\bm{r}}_i$ from a low-discrepancy sequence on $\Omega$. 
For the $\hat{N}$ points that correspond to integrable trajectories, we estimate footpoint time derivatives $(\dot{X}_i,\dot{Y}_i)$ using a technique based on \cite{Ruth:2024}. (See supplementary materials.) 
The residual $R_{\mathrm{Dyn}}^2$ ignores non-integrable trajectories. The second residual is given by $R_{\mathrm{Inv}}^2 = \frac{1}{N}\sum_i(\mathcal{J}_{\Sigma}(\bm{r}_i)-\mathcal{J}_{\Sigma}(F(\bm{r}_i)))^2$, where $F : \Omega \to \Omega$ denotes the full-orbit Poincar\'e map on $\Sigma$. It penalizes non-invariance of $\mathcal{J}_{\Sigma}$ \cite{Ruth:2023}. This, along with the oversampling in $\bm{r}_i$, smoothes $\mathcal{J}_{\Sigma}$ in chaotic regions. The quotient can be minimized via a single generalized eigenvalue problem because both residuals in \eqref{eq:minJ} are quadratic in $\mathcal{J}_{\Sigma}$.
{ Computing the $\hat{N}$ samples of $({X},{Y})$ dominates computational time. An average trajectory computation requires around $1000$ gyroperiods, so the method's total cost is on the order of the time to simulate $1000$ gyroperiods for $1000$ particles using a full-orbit integrator.}

{ Many other methods compute approximate invariant structures, including Lagrangian coherent structures \cite{hadjighasem2017,froyland2014}, flux surfaces \cite{dewar-qfm-maps,dewar-qfm-fields,Ruth:2023}, and invariant tori more generally \cite{haro-parameterization}. However, these methods do not distinguish between the true adiabatic invariant $\mathcal{J}$ associated with continuation of Kruskal's perturbative $U(1)$-symmetry and arbitrary invariants of the form $I(E,\mathcal{J})$. Our method accounts for this ambiguity because replacing $\mathcal{J}$ with $I(E,\mathcal{J})$ in the non-perturbative equations of motion \eqref{NPGC_UV} generally changes predicted footpoint trajectories, and thereby increases the residual $R_{\mathrm{Dyn}}^2(\mathcal{J}_{\Sigma})$. In contrast to general constants of motion, the learned non-perturbative invariant $\mathcal{J}$ completely {determines the dynamics}.}


Fig.~\ref{fig:Js} compares predictions using learned $\mathcal{J}_\Sigma$ to full-orbit simulation data previously shown in Fig.~\ref{fig:epsilons} at $\epsilon \in \{0.15,0.27\}$. 
The global phase portraits in Fig.~\ref{fig:Js} (a,d) reveal significant improvements over the asymptotics shown in Fig.~\ref{fig:epsilons} (b,c). 
Let $F_{\mathcal{J}}$ denote the approximate Poincar\'e map obtained by evolving Eqs.~\eqref{NPGC_UV}-\eqref{NPGC_D} over a gyroperiod using learned $\mathcal{J}$. Let $F_j$ denote the corresponding map using order-$j$ truncation of \eqref{perturb_Jsig}. Figure \ref{fig:Js} (b,e) compares the log-absolute errors, $\log_{10} |F_{\mathcal{J}}-F|$
and $\log_{10}|F_j-F|$, with $j=1$ for $\epsilon=0.15$ and $j=0$ for $\epsilon=0.27$ (larger $j$ leads to zero denominators $D$ in both cases). For both $\epsilon$-values, Poincar\'e map dynamics predicted using learned $\mathcal{J}_\Sigma$ outperforms traditional asymptotics by orders of magnitude over most of the phase portrait.  



{ 

\emph{Discussion--} Advanced stellarator design principles like quasisymmetry and omnigeneity aim to improve confinement by imbuing the traditional guiding center model with desirable dynamical features, such as symmetry or small radial flux. Reliance on traditional guiding center limits applicability of these principles to the regime $\epsilon\ll 1$. New confinement principles targeting comparable dynamical improvements for the non-perturbative guiding center model promise to extend the benefits of optimized confinement to larger-$\epsilon$ particles, such as $\alpha$-particles in stellarators that break the traditional guiding center model. 

Assessing a candidate stellarator reactor design merits sensitivity analysis of $\alpha$-particle loss fraction over an energy confinement time. Assessing sensitivity to perturbations of the $\alpha$-particle distribution function requires either high-dimensional adjoint solves or large Monte-Carlo (MC) sampling of $\alpha$-particle trajectories from various perturbed distributions. Because each MC sample evolves in the same field configuration, computational overhead associated with training the non-perturbative guiding center model falls well below the cost of performing all required Monte-Carlo sampling using a full-orbit approach. Thus, in device designs where guiding center fails for $\alpha$-particles, accuracy and timestepping efficiency motivate trajectory sampling using the non-perturbative guiding center model over traditional guiding center or full-orbit models. 

Direct optimization of $\alpha$-particle confinement \cite{albert_accelerated_2020,albert_alpha_2023,bindel_direct_2023} entails repeated ensemble trajectory simulations during an inner field-optimization loop. Computational feasibility of this technique currently requires traditional guiding center for trajectory calculations, limiting its applicability to particles with $\epsilon\ll 1$. Deploying a trained non-perturbative guiding center model in place of traditional guiding center promises to extend applicability of direct fast-particle confinement optimization to particles with non-perturbative $\epsilon$. However, the training method proposed here falls short for inner-loop applications because it must be repeated whenever the field changes, thereby encurring unacceptable computational overhead comparable to full-orbit trajectory generation. Overcoming this barrier requires solving the operator learning problem \cite{LuLu_2021} of mapping $\bm{B}$ to $\mathcal{J}$. Recent advances in machine learning, especially in foundation models \cite{Subramanian_2023} and sparse regression \cite{Brunton_2016,Messenger_2021}, stand poised to tackle this challenge.


}

{ Most models for runaway electron trajectories in tokamaks also employ the traditional guiding center equations. However, loss fractions for runaway electrons in tokamak fields may deviate significantly \cite{Carbajal_2020} when computed using traditional guiding center or full-orbit modeling. Similarly, relative field variations along cyclotron orbits for $30\text{ MeV}$ runaways in DIII-D-like fields reach values in excess of $15\%$ \cite{Carbajal_2017}. After suitable modification accounting for relativistic effects, non-perturbative guiding center theory therefore offers the opportunity to correct runaway trajectory errors committed by traditional guiding center without resorting to full-orbit simulation.}




\noindent\emph{Acknowledgements--} This material is based on work supported by the U.S. Department of Energy, Office of Science, Office of Advanced Scientific Computing Research, as a part of the Mathematical Multifaceted Integrated Capability Centers program, under Award Number DE-SC0023164. It was also supported by U.S. Department of Energy grant \# DE-FG02-04ER54742, as well as the Los Alamos National Laboratory LDRD program under Project No. 20240858PRD.


\section{Supplementary material}
\subsection{Dynamical systems underpinning}

This Section provides statements and proofs of some basic results in the theory of ordinary differential equations (ODEs) with $U(1)$-symmetry and Hamiltonian structure. It also shows how these results generalize the theory presented in the main text from magnetic fields of the special form $\bm{B} = B(x,y)\,\bm{e}_z$ to general non-vanishing magnetic fields. Throughout, $\dot{z} = X(z)$ denotes a system of first-order ODEs on the phase space $Z\ni z$. Note that $X$ may be understood as a vector field on $Z$ that encodes the ODE. We make no distinction between the ODE system and the vector field $X$. We will always use the symbol $F_t = \exp(t\,X):Z\rightarrow Z$ to denote the time-$t$ flow map for $X$. This discussion assumes smoothness of the various geometric objects that appear.

As is standard, the symbol $U(1)$ denotes the group of complex numbers with unit modulus $e^{i\theta}$. We identify this group with the set of real numbers $\theta$ modulo $2\pi$. To formalize the notion of $U(1)$-symmetry we refer to $U(1)$-actions. A $U(1)$-action on $Z$ is a family of mappings $\Phi_\theta:Z\rightarrow Z$, parameterized by $\theta\in U(1)$, such that $\Phi_0=\Phi_{2\pi} = \text{id}_Z$ and $\Phi_{\theta_1 + \theta_2} = \Phi_{\theta_1}\circ\Phi_{\theta_2}$ for each $\theta_1,\theta_2\in U(1)$. 

We think of a $U(1)$-action as a collection of generalized rotations. Given $z\in Z$ the $U(1)$-orbit containing $z$ is the set $\mathcal{O}_z = \{\Phi_\theta(z)\mid \theta\in U(1)\}$ comprising all possible rotations of $z$. The vector field $R(z) = (\partial_\theta\Phi_\theta(z))|_{\theta = 0} $ is the infinitesimal generator of the $U(1)$-action, which is tangent to the collection of $U(1)$-orbits. We say that $X$ is $U(1)$-invariant with respect to a $U(1)$-action $\Phi_\theta$ if $F_t\circ\Phi_\theta = \Phi_\theta\circ F_t$ for every $\theta\in U(1)$ and $t\in\mathbb{R}$. If the $U(1)$-action is contextually clear, we will simply say $X$ is $U(1)$-invariant.

Fix a $U(1)$-action $\Phi_\theta$ and suppose that $X$ is $U(1)$-invariant. Assume there is a global Poincar\'e section $\Sigma\subset Z$ for $\Phi_\theta$ . Then, by definition of global Poincar\'e sections, $\Sigma$ is a hypersurface in $Z$ and each $U(1)$-orbit intersects $\Sigma$ uniquely and transversally. Let $\pi:Z\rightarrow \Sigma$ denote the mapping that sends $z\in Z$ to the unique point of intersection between $\Sigma$ and $\mathcal{O}_z$. The main text refers to $\pi$ as the footpoint map. A basic result referred to in the main text shows that $\pi$ maps the ODE system $X$ on $Z$ to another ODE system $X_\Sigma$ (i.e. a vector field) on $\Sigma$. The main text refers to $X_\Sigma$ as the footpoint flow field.

\begin{theorem}\label{footpoint_flow_thm}
There is a unique vector field $X_\Sigma$ on $\Sigma$ such that, for every streamline $z(t)$ of $X$, $\sigma(t) = \pi(z(t))$ is a streamline of $X_\Sigma$.
\end{theorem}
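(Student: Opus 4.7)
The plan is to define $X_\Sigma$ as the pushforward of $X$ by the footpoint map $\pi$, then verify well-definedness, the streamline property, and uniqueness, in that order. For each $\sigma\in\Sigma$, I would set $X_\Sigma(\sigma) := (d\pi)_z\,X(z)$ for any $z\in\pi^{-1}(\sigma)=\mathcal{O}_\sigma$. The output automatically sits in $T_{\pi(z)}\Sigma = T_\sigma\Sigma$, so the only question is whether this prescription is independent of the representative $z$ chosen from the orbit.

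The key step, and the only one where $U(1)$-invariance of $X$ is used essentially, is the well-definedness check. It rests on two identities. First, $\pi\circ\Phi_\theta=\pi$ for every $\theta\in U(1)$: since $\Phi_\theta$ permutes each $U(1)$-orbit setwise, $\mathcal{O}_{\Phi_\theta(z)}=\mathcal{O}_z$, so both sides return the unique point where this common orbit meets $\Sigma$. Second, differentiating the $U(1)$-invariance relation $F_t\circ\Phi_\theta=\Phi_\theta\circ F_t$ at $t=0$ yields $(d\Phi_\theta)_z\,X(z)=X(\Phi_\theta(z))$. Combining these, for any $z'=\Phi_\theta(z)$,
\[
(d\pi)_{z'}\,X(z')=(d\pi)_{\Phi_\theta(z)}(d\Phi_\theta)_z\,X(z)=d(\pi\circ\Phi_\theta)_z\,X(z)=(d\pi)_z\,X(z),
\]
so the value is unambiguous on each orbit. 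Smoothness of $X_\Sigma$ then follows from smoothness of $X$ together with smoothness of $\pi$, the latter a consequence of the implicit function theorem applied to transversal intersection of $U(1)$-orbits with the hypersurface $\Sigma$.

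The streamline property is immediate from the chain rule: if $\dot z(t)=X(z(t))$ and $\sigma(t)=\pi(z(t))$, then
\[
\dot\sigma(t)=(d\pi)_{z(t)}\,\dot z(t)=(d\pi)_{z(t)}\,X(z(t))=X_\Sigma(\pi(z(t)))=X_\Sigma(\sigma(t)).
\]
Uniqueness falls out as follows. If $Y_\Sigma$ is any vector field on $\Sigma$ with the required property, pick $\sigma\in\Sigma$ and let $z(t)$ be the $X$-streamline with $z(0)=\sigma$. Because $\pi$ restricted to $\Sigma$ is the identity (the orbit $\mathcal{O}_\sigma$ meets $\Sigma$ at $\sigma$ itself), $\sigma(t)=\pi(z(t))$ satisfies $\sigma(0)=\sigma$ and solves both $\dot\sigma=X_\Sigma(\sigma)$ and $\dot\sigma=Y_\Sigma(\sigma)$, forcing $X_\Sigma(\sigma)=Y_\Sigma(\sigma)$.

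I expect the well-definedness step to be the only piece with real content; the rest is chain-rule accounting. One minor subtlety worth a sentence in a full write-up is the smoothness of $\pi$, which follows purely from the transversality hypothesis on $\Sigma$ and has nothing to do with Hamiltonian structure, confirming that this theorem is a clean consequence of $U(1)$-invariance alone.
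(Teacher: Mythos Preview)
Your proof is correct but follows a genuinely different route from the paper's. The paper works at the level of flows: it first constructs a one-parameter family $f_t:\Sigma\to\Sigma$ satisfying $\pi\circ F_t = f_t\circ\pi$ by showing that $\pi(F_t(\mathcal{O}_\sigma))$ is a singleton (this is where commutativity $F_t\circ\Phi_\theta=\Phi_\theta\circ F_t$ enters), then verifies the semigroup property $f_{t_1+t_2}=f_{t_1}\circ f_{t_2}$ from that same commuting relation, and finally obtains $X_\Sigma$ as the infinitesimal generator of $f_t$. You instead work directly at the infinitesimal level, defining $X_\Sigma$ as the pushforward $(d\pi)_z X(z)$ and using $\pi\circ\Phi_\theta=\pi$ together with $\Phi_\theta$-relatedness of $X$ to itself to establish well-definedness. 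Your approach is more concise and isolates the role of $U(1)$-invariance in a single chain-rule computation; the paper's flow-level argument has the mild advantage of producing the reduced flow $f_t$ explicitly as a byproduct, which feeds directly into the remark following the theorem about continuous-time interpolation of the Poincar\'e map. The uniqueness arguments in the two proofs are essentially identical.
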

\begin{proof}

Commutativity of $F_t$ and $\Phi_\theta$ implies there is a $1$-parameter family of mappings $f_t:\Sigma\rightarrow\Sigma$, $t\in\mathbb{R}$, such that $\pi\circ F_t = f_t\circ\pi$. To see this, let $z_0\in \mathcal{O}_\sigma$ be an arbitrary point in the $U(1)$-orbit containing $\sigma\in \Sigma$. The image point $z(t) = F_t(z_0)$ is contained in the $U(1)$-orbit $\mathcal{O}_{\pi(z(t))}$. If $z_0^\prime\in \mathcal{O}_\sigma$ is any other point in the $U(1)$-orbit containing $\sigma$ then there is a $\theta\in U(1)$ such that $z_0^\prime = \Phi_\theta(z_0)$. By commutativity, the new image point $z^\prime(t) = F_t(z_0^\prime)$ is related to the previous one according to $z^\prime(t) = F_t(\Phi_\theta(z_0)) = \Phi_\theta(F_t(z_0)) = \Phi_\theta(z(t))$. Thus, $z^\prime(t)$ and $z(t)$ lie on a common $U(1)$-orbit. The projected image $\pi(F_t(\mathcal{O}_{\sigma}))$ is therefore the singleton set $\{\sigma(t)\}$, where $\sigma(t) = \pi(F_t(z_0))$, for any $z_0\in \mathcal{O}_\sigma$. We define $f_t(\sigma) = \sigma(t)$. Let $z_0\in Z$ be any point in phase space and set $\sigma = \pi(z_0)$. By definition of $f_t$, we have
\begin{align*}
f_t(\pi(z_0))=f_t(\sigma) =\sigma(t) = \pi(F_t(z_0)),
\end{align*}
as claimed.

The commuting property $\pi\circ F_t = f_t\circ\pi$ implies the $1$-parameter family of mappings $f_t:\Sigma\rightarrow\Sigma$ satisfies the flow property $\forall t_1,t_2\in\mathbb{R},\quad f_{t_1+t_2} = f_{t_1}\circ f_{t_2}$. Indeed, if $\sigma\in\Sigma$ there is some $z\in Z$ with $\pi(z) = \sigma$, which implies
\begin{align*}
f_{t_1+t_2}(\sigma) = f_{t_1+t_2}(\pi(z)) &= \pi(F_{t_1+t_2}(z)) = \pi(F_{t_1}(F_{t_2}(z))) \\
&= f_{t_1}(\pi(F_{t_2}(z))) = f_{t_1}(f_{t_2}(\pi(z))) = f_{t_1}(f_{t_2}(\sigma)).
\end{align*}
Therefore $f_t$ is the flow map for a vector field $X_{\Sigma}$ on $\Sigma$. 

If $z(t)$ is any streamline for $X$ then $\sigma(t) = \pi(z(t))$ is a streamline for $X_\Sigma$. To see this first let $z_0 = z(0)$, $\sigma_0 = \sigma(0)$, and observe that $\sigma(t) = \pi(F_t(z_0)) = f_t(\sigma_0)$. Since $f_t$ is the flow map for $X_\Sigma$ it follows that $\sigma(t)$ is a streamline for $X_\Sigma$.

Suppose that $Y_\Sigma$ were a second vector field on $\Sigma$ sharing the previous property with $X_\Sigma$. For $\sigma_0\in \Sigma$ let $z_0$ be any point in the $U(1)$-orbit containing $\sigma_0$. There is a unique $X$-streamline $z(t)$ with $z(0) = z_0$. Moreover $\sigma(t) = \pi(z(t))$ is a streamline for both $X_\Sigma$ and $Y_\Sigma$. In particular,
\begin{align*}
Y_\Sigma(\sigma_0) = Y_\Sigma(\sigma(0)) = \frac{d}{dt}\bigg|_0\sigma(t) = X_\Sigma(\sigma(0)) = X_\Sigma(\sigma_0),
\end{align*}
which implies $X_\Sigma = Y_\Sigma$.
\end{proof}

\begin{remark}
This result has an interesting interpretation when $\Sigma$ is also a Poincar\'e section for $X$, the ODE system of interest, as happens in the case of charged particles moving in a strong magnetic field. Since $\Sigma$ is a Poincar\'e section for $X$ there is a well-defined poincar\'e map and associated discrete-time dynamics for $X$ on $\Sigma$. It is generally interesting to inquire as to whether there is a continuous-time dynamical system on $\Sigma$ that provides continuous-time interpolation of discrete-time Poincar\'e map dynamics. Theorem \ref{footpoint_flow_thm} says that, owing to the presence of $U(1)$-symmetry with a common Poincar\'e section, there is indeed such a dynamical system on $\Sigma$ -- that defined by the footpoint flow field $X_\Sigma$.
\end{remark}

\begin{remark}
Fig. \ref{sigma_drawing} gives a visual representation of Theorem \ref{footpoint_flow_thm}.
\end{remark}

\begin{figure}
\includegraphics[width=0.7\textwidth,trim={0 6cm 0 6cm}]{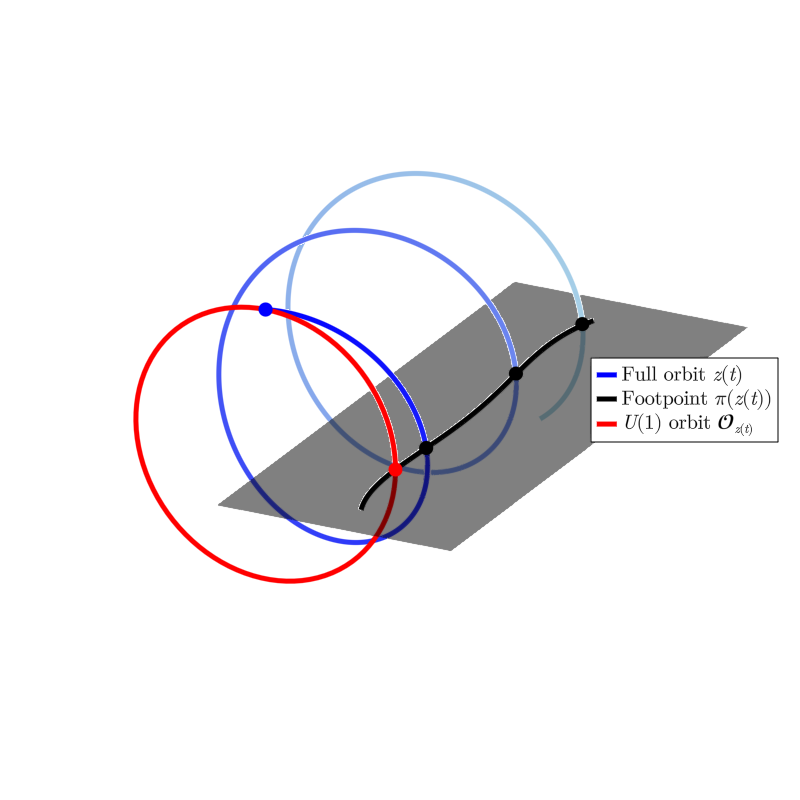}
\caption{\label{sigma_drawing} An illustration of a trajectory described by Theorem \ref{footpoint_flow_thm}.
The full orbit trajectory is given in blue and the corresponding footpoint trajectory is in black, traveling on the gray Poincar\'e section $\Sigma$.
An example of a $U(1)$-symmetry orbit is plotted in red. 
Although only a single $U(1)$-orbit is shown, there is such an orbit containing $z(t)$ for all times $t$.}
\end{figure}

Without further information about the ODE system $X$ finding its footpoint flow field requires detailed knowledge of the $U(1)$-orbits. These orbits are known when the underlying $U(1)$-symmetry is known in advance, but not when dealing with ``hidden" symmetries, as in non-perturbative guiding center modeling. Fortunately, matters simplify considerably when $X$ is a Hamiltonian system like the Lorentz force Law. The following Theorem shows that in the Hamiltonian setting the footpoint flow field is completely determined by $\Sigma$, the Poisson bracket for $X$, the Hamiltonian for $X$, and the conserved quantity $J$ associated with the $U(1)$-action $\Phi_\theta$; detailed knowledge of the $U(1)$-orbits is not required.

\begin{theorem}\label{hamiltonian_footpoint_thm}
If $X$ is a Hamiltonian system with Hamiltonian $H$ and Poisson bracket $\{\cdot,\cdot\}$ that satisfy
\begin{itemize}
\item $\{\cdot,\cdot\}$ and $H$ are each $U(1)$-invariant: for all $\theta\in U(1)$, $f,g:Z\rightarrow \mathbb{R}$, we have $H\circ\Phi_\theta = H$, $\{f\circ\Phi_\theta,g\circ\Phi_\theta\} = \{f,g\}\circ\Phi_\theta$,
\item the infinitesimal generator $R$ for the $U(1)$-action is a Hamiltonian vector field with Hamiltonian $J$,
\end{itemize} 
then the following is true.   
\begin{enumerate}
\item The hypersurface $\Sigma$ has a natural Poisson bracket $\{\cdot,\cdot\}_\Sigma$ with $J_\Sigma = J\mid\Sigma$ as a Casimir invariant. If $Z$ is equipped with coordinates $(\sigma^i,\zeta)$, where $\zeta$ is an angular coordinate, and $\Sigma = \{\zeta = 0\}$ then the $\sigma^i$ parameterize $\Sigma$. Moreover the Poisson bracket $\{f,g\}_{\Sigma}$ between functions $f = f(\sigma)$, 
 and $g = g(\sigma)$ on $\Sigma$ is given explicitly by
\begin{align}
\{f,g\}_\Sigma &= \partial_{\sigma}f^T\mathbb{J}_{\Sigma}\partial_{\sigma}g+\partial_\sigma J_\Sigma^T\mathbb{J}_\Sigma\,\partial_\sigma f\,\frac{N_\Sigma^T\partial_{\sigma}g}{N_\Sigma^T\partial_\sigma J_\Sigma}- \partial_\sigma J_\Sigma^T\mathbb{J}_\Sigma\,\partial_\sigma g\,\frac{N_\Sigma^T\partial_{\sigma}f}{N_\Sigma^T\partial_\sigma J_\Sigma}, \label{reduced_PB}
\end{align}
where the matrix $\mathbb{J}_\Sigma$ has components $\mathbb{J}_\Sigma^{ij} = \{\sigma^i,\sigma^j\}(\sigma,0)$, and the column vector $N$ has components $N^i_\Sigma = \{\sigma^i,\zeta\}(\sigma,0)$.
\item The footpoint flow field $X_{\Sigma}$ on $\Sigma$ is a Hamiltonian vector field with respect to the Poisson bracket $\{\cdot,\cdot\}_{\Sigma}$ and the Hamiltonian function $H_\Sigma = H\mid \Sigma$.
\end{enumerate}
\end{theorem}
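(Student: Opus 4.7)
My strategy is Poisson reduction: pull back functions on $\Sigma$ to $U(1)$-invariant functions on $Z$ via the footpoint map, bracket them there, and restrict back to $\Sigma$. For part 1, define $\pi^*f := f\circ\pi$ for $f:\Sigma\to\mathbb{R}$. Hypothesis (II) makes $\pi$ a submersion, so $\pi^*$ is a bijection onto the algebra $C^\infty(Z)^{U(1)}$ of $U(1)$-invariant functions. The assumed $U(1)$-invariance of $\{\cdot,\cdot\}$ guarantees $\{\pi^*f,\pi^*g\}\in C^\infty(Z)^{U(1)}$, so there is a unique $h\in C^\infty(\Sigma)$ with $\pi^*h = \{\pi^*f,\pi^*g\}$; set $\{f,g\}_\Sigma := h = \{\pi^*f,\pi^*g\}\mid\Sigma$. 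Bilinearity, antisymmetry, Leibniz, and Jacobi descend from the corresponding properties of $\{\cdot,\cdot\}$ because $\pi^*$ is an injective algebra homomorphism. Since $R = X_J$, every $U(1)$-invariant $F$ satisfies $\{F,J\} = R[F] = 0$; restricting yields the Casimir property $\{f,J_\Sigma\}_\Sigma = 0$.

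To derive (\ref{reduced_PB}), I apply the general coordinate expression
\[
\{A,B\} = \partial_\sigma A^T\mathbb{J}\,\partial_\sigma B + \partial_\sigma A^T N\,\partial_\zeta B - \partial_\zeta A\,N^T\partial_\sigma B
\]
to $A = \pi^*f$, $B = \pi^*g$. The $\partial_\sigma$ evaluations on $\Sigma$ reduce to $\partial_\sigma f$ and $\partial_\sigma g$; the remaining data $\partial_\zeta(\pi^*f)|_\Sigma$ and $\partial_\zeta(\pi^*g)|_\Sigma$ are fixed by the invariance constraint $R[\pi^*f]=0$. Writing $R = X_J$ in coordinates gives $R^i_\sigma|_\Sigma = \mathbb{J}_\Sigma^{ij}\partial_{\sigma^j}J_\Sigma + N_\Sigma^i\partial_\zeta J|_\Sigma$ and $R^\zeta|_\Sigma = -N_\Sigma^T\partial_\sigma J_\Sigma$, where transversality of $U(1)$-orbits to $\Sigma$ ensures the denominator is nonzero. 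Solving for $\partial_\zeta(\pi^*f)|_\Sigma$ and substituting, the contributions proportional to $\partial_\zeta J|_\Sigma$ cancel pairwise via antisymmetry of $\mathbb{J}_\Sigma$, producing (\ref{reduced_PB}).

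For part 2, fix $\sigma\in\Sigma$, a test function $f$ on $\Sigma$, and the $X$-streamline $z(t)$ with $z(0) = \sigma$. Theorem \ref{footpoint_flow_thm} gives $\pi(z(t)) = f_t(\sigma)$, whence
\[
X_\Sigma[f](\sigma) = \tfrac{d}{dt}\big|_0 f(\pi(z(t))) = \tfrac{d}{dt}\big|_0(\pi^*f)(z(t)) = X[\pi^*f](\sigma) = \{\pi^*f,H\}(\sigma).
\]
Because $H$ is $U(1)$-invariant, $H = \pi^*H_\Sigma$, so by the very definition of the reduced bracket the right side equals $\{f,H_\Sigma\}_\Sigma(\sigma)$. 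Hence $X_\Sigma$ is the Hamiltonian vector field of $H_\Sigma$ under $\{\cdot,\cdot\}_\Sigma$.

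The main obstacle is the algebraic derivation of (\ref{reduced_PB}): one must carefully track sign conventions for $X_J$, solve the invariance PDE at $\Sigma$, and verify that the apparent dependence on $\partial_\zeta J|_\Sigma$—generically nonzero since $(\sigma,\zeta)$ need not be adapted to the deformed $U(1)$-action when $\epsilon\neq 0$—cancels identically. Everything else is a direct consequence of reducing a Hamiltonian system by a Poisson-commuting symmetry.
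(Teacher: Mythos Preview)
Your proposal is correct and follows essentially the same route as the paper: define $\{f,g\}_\Sigma = \{\pi^*f,\pi^*g\}\mid\Sigma$, verify the Poisson axioms and Casimir property by descent, then derive \eqref{reduced_PB} by writing the ambient bracket in $(\sigma,\zeta)$-coordinates and eliminating $\partial_\zeta(\pi^*f)$ via the invariance condition $\mathcal{L}_R(\pi^*f)=0$ with $R=X_J$, noting the $\partial_\zeta J$ contributions cancel. Your part 2 is the same content as the paper's appeal to Guillemin--Sternberg collectivization, just carried out by hand via the chain rule at a point of $\Sigma$.
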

\begin{proof}
Let $\{\cdot,\cdot\}$ denote the Poisson bracket on ambient phase space $Z$. If $F,G:Z\rightarrow\mathbb{R}$ are smooth functions then
\begin{align*}
\mathcal{L}_{R}\{F,G\} = \{\{F,G\},J\} = \{\{F,J\},G\} + \{F,\{G,J\}\} = \{\mathcal{L}_{R}F,G\} + \{F,\mathcal{L}_{R}G\},
\end{align*}
because $J$ is the Hamiltonian for $R$ and $\{\cdot,\cdot\}$ satisfies the Jacobi identity. In particular, if $F,G$ are both $U(1)$-invariant, so that $\mathcal{L}_RF = \mathcal{L}_RG = 0$, then so is their Poisson bracket, $\mathcal{L}_{R}\{F,G\} = 0$.

Let $\iota_{\Sigma}:\Sigma\rightarrow Z$ denote the inclusion map for the Poincar\'e section $\Sigma$. Given smooth functions $f,g:\Sigma\rightarrow\mathbb{R}$ define their bracket according to
\begin{align*}
\{f,g\}_{\Sigma} = \iota_{\Sigma}^*\{\pi^*f,\pi^*g\},
\end{align*}
where $\pi:Z\rightarrow\Sigma$ denotes the footpoint map. This bracket is clearly bilinear and skew-symmetric. It satisfies the Leibniz property because
\begin{align*}
\{f,gh\}_{\Sigma}  &= \iota_{\Sigma}^*\bigg(\{\pi^*f,\pi^*g\}\,\pi^*h + \{\pi^*f,\pi^*h\}\,\pi^*g\bigg)\\
&= \{f,g\}_{\Sigma}\,\iota_{\Sigma}^*\pi^*h + \{f,h\}_{\Sigma}\,\iota_{\Sigma}^*\pi^*g\\
& = \{f,g\}_{\Sigma}\,h + \{f,h\}_{\Sigma}\,g,
\end{align*}
where we have used the Leibniz property for $\{\cdot,\cdot\}$ and $\iota_{\Sigma}^*\pi^* = (\pi\circ\iota_{\Sigma})^* = \text{id}_{\Sigma}^* = 1$. It satisfies the Jacobi identity because (a) $\{\pi^*f,\pi^*g\}$ is $U(1)$-invariant since both $\pi^*f$ and $\pi^*g$ are $U(1)$-invariant, (b) $\pi^*\iota_{\Sigma}^*\{\pi^*f,\pi^*g\} = \{\pi^*f,\pi^*g\}$ since $\iota_{\Sigma}\circ\pi$ shifts points along $U(1)$-orbits, and (c) the Jacobi identity involves a cyclic sum of terms like
\begin{align*}
\{\{f,g\}_{\Sigma},h\}_{\Sigma} & = \iota_{\Sigma}^*\{\pi^*\iota_{\Sigma}^*\{\pi^*f,\pi^*g\},\pi^*h\} = \iota_{\Sigma}^*\{\{\pi^*f,\pi^*g\},\pi^*h\},
\end{align*}
which must vanish by the Jacobi identity for $\{\cdot,\cdot\}$. Thus, $\{\cdot,\cdot\}_{\Sigma}$ defines a Poisson bracket on $\Sigma$. The restriction of $J$ to $\Sigma$, $J_\Sigma = J\mid \Sigma = \iota_{\Sigma}^*J$, is a Casimir for $\{\cdot,\cdot\}_\Sigma$ because
\begin{align*}
\{f,J_\Sigma\}_\Sigma = \iota_{\Sigma}^*\{\pi^*f,\pi^*J_\Sigma\} = \iota_{\Sigma}^*\{\pi^*f,J\} = \iota_{\Sigma}^*(\mathcal{L}_{R}\pi^*f) = 0,
\end{align*}
for each smooth function $f$ on $\Sigma$.

The footpoint map $\pi:Z\rightarrow \Sigma$ is a Poisson map between $(Z,\{\cdot,\cdot\})$ and $(\Sigma,\{\cdot,\cdot\}_{\Sigma})$ because
\begin{align*}
\pi^*\{f,g\}_{\Sigma} = \pi^*\iota_{\Sigma}^*\{\pi^*f,\pi^*g\} = \{\pi^*f,\pi^*g\}.
\end{align*}
The Hamiltonian $H$ is $U(1)$-invariant by hypothesis and  therefore determined by its values on the Poincar\'e section, i.e. by the restriction $H_\Sigma = H\mid \Sigma$, in the sense that $\pi^*H_\Sigma = H$. It now follows from Guillemin-Sternberg collectivization \cite{Guillemin_Sternberg_1980} that if $z(t)$ is any solution of Hamilton's equations on $Z$ with Hamiltonian $H = \pi^*H_\Sigma$ then the corresponding footpoint trajectory $\sigma(t) = \pi(z(t))$ is a solution of Hamilton's equations on $\Sigma$ with Hamiltonian $H_\Sigma$. In other words, the footpoint trajectories are streamlines for the Hamiltonian vector field on $\Sigma$ with Hamiltonian $H_\Sigma$. But, by Theorem \ref{footpoint_flow_thm}, the only vector field on $\Sigma$ that has footpoint trajectories as streamlines is the footpoint flow field $X_\Sigma$. This establishes the second part of the theorem.

To complete the proof, we must now confirm the formula \eqref{reduced_PB} for the Poisson bracket $\{\cdot,\cdot\}_{\Sigma}$ in the coordinates $(\sigma^i,\zeta)$. Without loss of generality, suppose the index $i$ takes values in $\{1,\dots,d\}$, for some integer $d$. There must be a skew-symmetric matrix $\mathbb{J}^{ij}(\sigma,\zeta)$ and a column vector $N^i(\sigma,\zeta)$ such that
\begin{align*}
\{F,G\} &= \begin{pmatrix} \partial_{\sigma^1}F & \dots & \partial_{\sigma^d}F & \partial_\zeta F\end{pmatrix}\begin{pmatrix} \mathbb{J} & N\\ -N^T & 0\end{pmatrix}\begin{pmatrix} \partial_{\sigma^1}G\\ \vdots \\ \partial_{\sigma^d}G \\ \partial_{\zeta}G\end{pmatrix}\\
&=\partial_{\sigma}F^T\mathbb{J}\partial_{\sigma}G + \partial_\zeta G\,N^T\partial_{\sigma}F - \partial_\zeta F\,N^T\partial_{\sigma}G.
\end{align*}
Setting $F = \sigma^{i}$ and $G=\sigma^j$ shows that the components of $\mathbb{J}$ are given by
\begin{align*}
\mathbb{J}^{ij} = \{\sigma^i,\sigma^j\},
\end{align*}
while setting $F = \sigma^i$ and $G=\zeta$ shows that the components of $N$ are given by
\begin{align*}
N^i = \{\sigma^i,\zeta\}.
\end{align*}
The inclusion map $\iota_{\Sigma}:\Sigma\rightarrow Z$ is given by $\sigma\mapsto (\sigma,0)$ in these coordinates. The Poisson bracket $\{\cdot,\cdot\}_{\Sigma}$ is therefore given by
\begin{align}
\{f,g\}_{\Sigma} = \partial_{\sigma}f^T\mathbb{J}_{\Sigma}\partial_{\sigma}g + \partial_\zeta (\pi^*g)(\sigma,0)\,N_\Sigma^T\partial_{\sigma}f - \partial_\zeta (\pi^*f)(\sigma,0)\,N_\Sigma^T\partial_{\sigma}g,\label{proto_red_bracket}
\end{align}
where $\mathbb{J}_\Sigma = \mathbb{J}(\sigma,0)$ and $N_\Sigma = N(\sigma,0)$.
This would be an explicit formula for $\{f,g\}_{\Sigma}$ were it not for the appearance of the $\zeta$-derivatives $\partial_\zeta (\pi^*f)(\sigma,0)$ and $\partial_\zeta (\pi^*g)(\sigma,0)$. These $\zeta$-derivatives can be expressed in terms of derivatives of $f,g$ as follows. By definition of the footpoint map $\pi$, the functions $\pi^*f,\pi^*g$ are each constant along $U(1)$-orbits. Equivalently, $\mathcal{L}_{R}\pi^*f = \mathcal{L}_{R}\pi^*g =0$. Writing the infinitesimal generator in components as $R = R^\zeta\,\partial_\zeta + \bm{R}^i\,\partial_{\sigma^i}$ therefore implies $R^\zeta\,\partial_\zeta(\pi^*f) + \bm{R}^T\,\partial_\sigma(\pi^*f) = 0$, or
\begin{align*}
\partial_\zeta(\pi^*f) = -\frac{\bm{R}^T\,\partial_\sigma(\pi^*f)}{R^\zeta},\quad \partial_\zeta(\pi^*g) = -\frac{\bm{R}^T\,\partial_\sigma(\pi^*g)}{R^\zeta}.
\end{align*}
These expressions can be simplified further using the fact that $J$ is the Hamiltonian for $R$. In particular, the components of $R$ must be given in terms of derivatives of $J$ according to
\begin{align*}
R^\zeta = \{\zeta,J\} = -N^T\partial_\sigma J ,\quad \bm{R}  = \{\sigma,J\} = \mathbb{J} \partial_\sigma J + N\,\partial_\zeta J.
\end{align*}
The $\zeta$-derivatives are therefore
\begin{align*}
\partial_\zeta(\pi^*f) = \frac{(\mathbb{J} \partial_\sigma J + N\,\partial_\zeta J)^T\,\partial_\sigma(\pi^*f)}{N^T\partial_\sigma J},\quad \partial_\zeta(\pi^*g) = \frac{(\mathbb{J} \partial_\sigma J + N\,\partial_\zeta J)^T\,\partial_\sigma(\pi^*g)}{N^T\partial_\sigma J},
\end{align*}
and the Poisson bracket \eqref{proto_red_bracket} becomes
\begin{align*}
\{f,g\}_{\Sigma} &= \partial_{\sigma}f^T\mathbb{J}_{\Sigma}\partial_{\sigma}g\\
&+ \frac{(\mathbb{J}_\Sigma \partial_\sigma J_\Sigma + N_\Sigma\,\partial_\zeta J(\sigma,0))^T\,\partial_\sigma g}{N_\Sigma^T\partial_\sigma J_\Sigma}\,N_\Sigma^T\partial_{\sigma}f\\
&- \frac{(\mathbb{J}_\Sigma \partial_\sigma J_\Sigma + N_\Sigma\,\partial_\zeta J(\sigma,0))^T\,\partial_\sigma f}{N_\Sigma^T\partial_\sigma J_\Sigma}\,N_\Sigma^T\partial_{\sigma}g\\
&= \partial_{\sigma}f^T\mathbb{J}_{\Sigma}\partial_{\sigma}g+ \frac{(\mathbb{J}_\Sigma \partial_\sigma J_\Sigma )^T\,\partial_\sigma g}{N_\Sigma^T\partial_\sigma J_\Sigma}\,N_\Sigma^T\partial_{\sigma}f- \frac{(\mathbb{J}_\Sigma \partial_\sigma J_\Sigma )^T\,\partial_\sigma f}{N_\Sigma^T\partial_\sigma J_\Sigma}\,N_\Sigma^T\partial_{\sigma}g\\
&= \partial_{\sigma}f^T\mathbb{J}_{\Sigma}\partial_{\sigma}g+\partial_\sigma J_\Sigma^T\mathbb{J}_\Sigma\,\partial_\sigma f\,\frac{N_\Sigma^T\partial_{\sigma}g}{N_\Sigma^T\partial_\sigma J_\Sigma}- \partial_\sigma J_\Sigma^T\mathbb{J}_\Sigma\,\partial_\sigma g\,\frac{N_\Sigma^T\partial_{\sigma}f}{N_\Sigma^T\partial_\sigma J_\Sigma}.
\end{align*}
\end{proof}
\begin{remark}
For the purposes of non-perturbative guiding center modeling, $\Sigma$; the Poisson bracket for $X$; and the Hamiltonian for $X$; are each known explicitly in advance. Theorem \ref{hamiltonian_footpoint_thm} therefore implies that the conserved quantity $\mathcal{J}$ associated with the ``hidden" $U(1)$-symmetry of the Lorentz force Law is all that is needed to find an explicit formula for the footpoint flow field, and therefore the non-perturbatibe guiding center model. The underlying mechanism that enables this remarkable simplification is Noether's theorem. Finding a formula for the footpoint flow field in general requires detailed knowledge of the underlying $U(1)$-symmetry. On the other hand, Noether's theorem implies that any Hamiltonian $U(1)$-symmetry is completely determined by its corresponding conservation law. Thus, in the Hamiltonian setting complete knowledge of the Noether conserved quantity implies complete knowledge of the corresponding $U(1)$-symmetry and therefore complete knowledge of the footpoint flow field.
\end{remark}

When Theorem \ref{hamiltonian_footpoint_thm} is applied to the Lorentz force Law written in dimensionless variables as $\dot{\bm{v}} = \bm{v}\times\bm{B}(\bm{x})$, $\dot{\bm{x}} = \epsilon\,\bm{v}$ it leads to the following characterization of the non-perturbative guiding center equations of motion in general non-vanishing magnetic fields. This result shows explicitly how to remove the assumption $\bm{B} = B(x,y)\,\bm{e}_z$ used in the main text. It also recovers the main text's non-perturbative guiding center equations of motion when $\bm{B}$ is translation-invariant along $z$. 

\begin{theorem}
Parameterize the guiding center Poincar\'e section $\Sigma$ using coordinates $\bm{\mathsf{X}} = \bm{x}$, $u = v_\parallel$, $r = v_\perp$. Let $J_\Sigma$ denote the restriction of the non-perturbative action integral to $\Sigma$. The footpoint flow field $X_{\Sigma} = \dot{\bm{\mathsf{X}}}\cdot\partial_{\bm{\mathsf{X}}} + \dot{u}\,\partial_u + \dot{r}\,\partial_r$ on $\Sigma$ for the Lorentz force system is given explicitly by  
\begin{align}
D_\epsilon\,\dot{\bm{\mathsf{X}}}\cdot \bm{b} & = \epsilon\,D_\epsilon\,u- {\epsilon\,\partial_u J_\Sigma \bigg( |\bm{B}|+ \epsilon\,{(u\,\bm{b} + r\,\bm{e}_1)\cdot[u\,r^{-1}\,\nabla\bm{b}\cdot\bm{e}_2 + \nabla\bm{e}_1\cdot\bm{e}_2]} \bigg)}\label{Xdotb}\\
D_\epsilon\,\dot{\bm{\mathsf{X}}}\cdot \bm{e}_1 & = \epsilon\,D_\epsilon\,r- \epsilon\,{\partial_r J_\Sigma \bigg( |\bm{B}|+ \epsilon\,{(u\,\bm{b} + r\,\bm{e}_1)\cdot[u\,r^{-1\,}\nabla\bm{b}\cdot\bm{e}_2 + \nabla\bm{e}_1\cdot\bm{e}_2]} \bigg)}\label{XdotOne}\\
D_\epsilon\,\dot{\bm{\mathsf{X}}}\cdot \bm{e}_2& =  -\epsilon^2\,{r^{-1}(u\,\bm{b} + r\,\bm{e}_1)\cdot\partial_{\bm{\mathsf{X}}}J_\Sigma + \epsilon^2\,r^{-1}(u\,\bm{b}+r\,\bm{e}_1)\cdot(\nabla\bm{b}\cdot\bm{e}_1) (r\,\partial_uJ_\Sigma - u\,\partial_rJ_\Sigma)}\label{XdotTwo}\\
D_\epsilon\,\dot{u} & = \epsilon\,\bigg(\bm{B}+ \epsilon\,(\bm{b}\cdot\nabla\times\bm{b})[u\,\bm{b} + r\,\bm{e}_1]\nonumber\\
&\hspace{3em}+\epsilon\,\bm{b}\times([u\,\bm{b} + r\,\bm{e}_1]\cdot\nabla\bm{b}) - \epsilon\,r\,\bm{e}_2\times (\nabla\bm{e}_1\cdot\bm{e}_2)\bigg)\cdot\partial_{\bm{\mathsf{X}}}J_\Sigma\label{udot_thm}\\
D_\epsilon\,\dot{r} & =  -\epsilon\,u\,r^{-1}\,\bigg(\bm{B}+ \epsilon\,(\bm{b}\cdot\nabla\times\bm{b})[u\,\bm{b} + r\,\bm{e}_1]\nonumber\\
&\hspace{6em}+\epsilon\,\bm{b}\times([u\,\bm{b} + r\,\bm{e}_1]\cdot\nabla\bm{b}) - \epsilon\,r\,\bm{e}_2\times (\nabla\bm{e}_1\cdot\bm{e}_2)\bigg)\cdot\partial_{\bm{\mathsf{X}}}J_\Sigma\label{rdot_thm},
\end{align}
where the denominator $D$ is given by
\begin{align}
D_\epsilon &=\epsilon\,r^{-1}\bm{e}_2\cdot\partial_{\bm{\mathsf{X}}} J_\Sigma + \epsilon\,r^{-1}([u\,\bm{b}+r\,\bm{e}_1]\cdot\nabla\bm{b}\cdot\bm{e}_1)\,\partial_u J_\Sigma +  r^{-1}|\bm{B}|\partial_r J_\Sigma\nonumber\\
&+ \epsilon\,r^{-1}\,(\bm{b}\cdot\nabla\times\bm{b})(u\partial_rJ_\Sigma - r\partial_uJ_\Sigma) + \epsilon\,(\partial_uJ_\Sigma\,\bm{b} + \partial_rJ_\Sigma\,\bm{e}_1)\cdot (\nabla\bm{e}_1\cdot\bm{e}_2).
\end{align}
\end{theorem}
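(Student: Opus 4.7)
The plan is to specialize Theorem \ref{hamiltonian_footpoint_thm} to the Lorentz force system $\dot{\bm{x}}=\epsilon\bm{v}$, $\dot{\bm{v}}=\bm{v}\times\bm{B}$, which is Hamiltonian with $H=|\bm{v}|^2/2$ and the given $U(1)$-invariant Lorentz bracket, the symmetry being generated by $\mathcal{R}$ with Hamiltonian $\mathcal{J}$. I pass to the adapted coordinates $(\sigma,\zeta)=(\bm{\mathsf{X}},u,r,\zeta)$ on a neighborhood of $\Sigma$, defined by $\bm{\mathsf{X}}=\bm{x}$ and $\bm{v}=u\,\bm{b}(\bm{\mathsf{X}})+r(\cos\zeta\,\bm{e}_1(\bm{\mathsf{X}})+\sin\zeta\,\bm{e}_2(\bm{\mathsf{X}}))$, so that $\Sigma=\{\zeta=0\}$. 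Theorem \ref{hamiltonian_footpoint_thm} then asserts that the footpoint flow is Hamiltonian with $H_\Sigma=u^2/2+r^2/2$ and reduced bracket \eqref{reduced_PB} determined by the structure matrix $\mathbb{J}_\Sigma$, the vector $N_\Sigma$, and $J_\Sigma=\mathcal{J}_\Sigma$. Reading off $\dot{\sigma}^i=\{\sigma^i,H_\Sigma\}_\Sigma$ and clearing the common factor $N_\Sigma^T\partial_\sigma\mathcal{J}_\Sigma$, which I will identify with $D_\epsilon$, yields \eqref{Xdotb}--\eqref{rdot_thm}.

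To compute $\mathbb{J}_\Sigma$ and $N_\Sigma$ I tabulate the coordinate brackets $\{\sigma^i,\sigma^j\}$ and $\{\sigma^i,\zeta\}$ at $\zeta=0$. From $v_\parallel=\bm{v}\cdot\bm{b}$, $v_\perp^2=|\bm{v}|^2-v_\parallel^2$, and $\tan\zeta=(\bm{v}\cdot\bm{e}_2)/(\bm{v}\cdot\bm{e}_1)$, the chain rule gives at $\zeta=0$:
\begin{align*}
\partial_{\bm{v}}v_\parallel &= \bm{b}, \quad \partial_{\bm{v}}v_\perp = \bm{e}_1, \quad \partial_{\bm{v}}\zeta = r^{-1}\bm{e}_2,\\
\partial_{x^i}v_\parallel &= r(\bm{e}_1\cdot\partial_{x^i}\bm{b}), \quad \partial_{x^i}v_\perp = u(\bm{b}\cdot\partial_{x^i}\bm{e}_1),\\
\partial_{x^i}\zeta &= r^{-1}(u\,\bm{b}+r\,\bm{e}_1)\cdot\partial_{x^i}\bm{e}_2.
\end{align*}
Substituting into the Lorentz bracket $\{F,G\}=\bm{B}\cdot\partial_{\bm{v}}F\times\partial_{\bm{v}}G+\epsilon(\partial_{\bm{x}}F\cdot\partial_{\bm{v}}G-\partial_{\bm{x}}G\cdot\partial_{\bm{v}}F)$ and using the right-handed frame identities $\bm{b}\times\bm{e}_1=\bm{e}_2$, $\bm{e}_2\times\bm{b}=\bm{e}_1$, $\bm{e}_1\times\bm{e}_2=\bm{b}$, together with the orthonormality consequences $\bm{b}\cdot\partial_i\bm{b}=0$, $\bm{e}_j\cdot\partial_i\bm{e}_j=0$, and the cross-identities $\bm{b}\cdot\partial_i\bm{e}_k=-\bm{e}_k\cdot\partial_i\bm{b}$, $\bm{e}_1\cdot\partial_i\bm{e}_2=-\bm{e}_2\cdot\partial_i\bm{e}_1$, produces closed-form expressions for every entry of $\mathbb{J}_\Sigma$ and $N_\Sigma$ in terms of $|\bm{B}|$, the connection one-form $\nabla\bm{e}_1\cdot\bm{e}_2$, the matrix $\nabla\bm{b}$, and their triad contractions.

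With $\mathbb{J}_\Sigma$ and $N_\Sigma$ in hand, I form $\{\sigma^i,H_\Sigma\}_\Sigma$ from \eqref{reduced_PB} for each $\sigma^i\in\{X^1,X^2,X^3,u,r\}$ and multiply through by $N_\Sigma^T\partial_\sigma\mathcal{J}_\Sigma$. Expanding into pieces proportional to $\partial_u\mathcal{J}_\Sigma$, $\partial_r\mathcal{J}_\Sigma$, and $\partial_{\bm{\mathsf{X}}}\mathcal{J}_\Sigma$ and rewriting the antisymmetric perpendicular part of $\nabla\bm{b}$ via the identity $\bm{e}_1\cdot\nabla\bm{b}\cdot\bm{e}_2-\bm{e}_2\cdot\nabla\bm{b}\cdot\bm{e}_1=\bm{b}\cdot\nabla\times\bm{b}$ reproduces the stated denominator $D_\epsilon$. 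The Cartesian equations $\dot{\bm{\mathsf{X}}}=\{\bm{\mathsf{X}},H_\Sigma\}_\Sigma$ are then projected onto the triad $(\bm{b},\bm{e}_1,\bm{e}_2)$ to produce \eqref{Xdotb}--\eqref{XdotTwo}, while \eqref{udot_thm}--\eqref{rdot_thm} come directly from the scalar brackets $\{u,H_\Sigma\}_\Sigma$ and $\{r,H_\Sigma\}_\Sigma$.

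The main obstacle is bookkeeping rather than conceptual: the bracket formula produces numerous gradient contractions of the triad, and collapsing them into the compact invariants ($\bm{b}\cdot\nabla\times\bm{b}$, $(u\bm{b}+r\bm{e}_1)\cdot\nabla\bm{b}\cdot\bm{e}_j$, and $\nabla\bm{e}_1\cdot\bm{e}_2$) appearing in the statement requires systematic use of the orthonormality identities above, especially the parallel-vorticity rewrite. Once those identities are applied consistently, the displayed equations follow from \eqref{reduced_PB} by linear algebra alone, and no additional analytic input beyond Theorem \ref{hamiltonian_footpoint_thm} is needed.
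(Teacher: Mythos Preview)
Your proposal is correct and follows essentially the same route as the paper: specialize Theorem \ref{hamiltonian_footpoint_thm} to the Lorentz system, compute the coordinate brackets $\{\sigma^i,\sigma^j\}$ and $\{\sigma^i,\zeta\}$ in the adapted variables $(\bm{\mathsf{X}},u,r,\zeta)$, restrict to $\zeta=0$ to obtain $\mathbb{J}_\Sigma$ and $N_\Sigma$, evaluate $\{\sigma^i,H_\Sigma\}_\Sigma$ with $H_\Sigma=\tfrac{1}{2}(u^2+r^2)$, and simplify using the orthonormal-frame identities and the rewrite $\bm{e}_1\cdot\nabla\bm{b}\cdot\bm{e}_2-\bm{e}_2\cdot\nabla\bm{b}\cdot\bm{e}_1=\bm{b}\cdot\nabla\times\bm{b}$. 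The only cosmetic differences are that the paper first does the computation for the unscaled system $\dot{\bm{x}}=\bm{v}$, $\dot{\bm{v}}=\bm{v}\times\bm{B}$ and recovers $\epsilon$ at the end via $t\to t/\epsilon$, $\bm{B}\to\bm{B}/\epsilon$, and that it writes out the full bracket at general $\zeta$ (introducing the rotating frame vectors $\bm{a},\bm{c}$) before specializing to $\zeta=0$, whereas you compute the $\zeta=0$ partials directly; neither changes the substance of the argument.
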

\begin{proof}
We will perform the calculation for the system $\dot{\bm{v}} = \bm{v}\times\bm{B}$, $\dot{\bm{x}} = \bm{v}$. The result in the Theorem statement follows from this calculation after applying the substitutions $t\rightarrow t/\epsilon$ and $\bm{B}\rightarrow \bm{B}/\epsilon$.

In coordinate-independent form, the Poisson bracket $\{\cdot,\cdot\}$ for the Lorentz force is given by
\begin{align*}
\{F,G\} =\partial_{\bm{x}}F\cdot\partial_{\bm{v}}G - \partial_{\bm{x}}G\cdot\partial_{\bm{v}}F +  \bm{B}\cdot \partial_{\bm{v}}F\times\partial_{\bm{v}}G.
\end{align*}
In terms of the coordinates $(\bm{x},v_\parallel,v_\perp,\zeta)$ defined by $\bm{v} = v_\parallel\bm{b} + v_\perp(\cos\zeta\,\bm{e}_1 + \sin\zeta\,\bm{e}_2)$, the bracket is instead
\begin{align*}
\{F,G\}& =\bm{b}\cdot (\partial_{\bm{x}}F\,\partial_{v_\parallel}G - \partial_{\bm{x}}G\,\partial_{v_\parallel}F)+\bm{a}\cdot (\partial_{\bm{x}}F\,\partial_{v_\perp}G - \partial_{\bm{x}}G\,\partial_{v_\perp}F)\\
& + (v_\perp\,\bm{a}\cdot\nabla\bm{b}\cdot\bm{a} + v_\parallel\,\bm{b}\cdot\nabla\bm{b}\cdot\bm{a})(\partial_{v_\parallel}F\,\partial_{v_\perp}G-\partial_{v_\parallel}G\,\partial_{v_\perp}F)\\
& + \bigg(N^{\bm{x}}\cdot \partial_{\bm{x}}F + N^{v_\parallel}\,\partial_{v_\parallel}F + N^{v_\perp}\,\partial_{v_\perp}F\bigg)\,\partial_\zeta G\\
&  - \bigg(N^{\bm{x}}\cdot \partial_{\bm{x}}G + N^{v_\parallel}\,\partial_{v_\parallel}G + N^{v_\perp}\,\partial_{v_\perp}G\bigg)\,\partial_\zeta F,
\end{align*}
where $\bm{a} = \cos\zeta\,\bm{e}_1 + \sin\zeta\,\bm{e}_2$, $\bm{c} = \bm{a}\times\bm{b} = -\cos\zeta\,\bm{e}_2+\sin\zeta\,\bm{e}_1$, and the components of $N$ are given by
\begin{align}
N^{\bm{x}} & = -\frac{1}{v_\perp}\bm{c}\\
N^{v_\parallel} & = -\bm{c}\cdot\nabla\bm{b}\cdot\bm{a} - \frac{v_\parallel}{v_\perp}(\bm{b}\cdot\nabla\bm{b})\cdot\bm{c} + \bm{b}\cdot\nabla\bm{e}_1\cdot\bm{e}_2\\
N^{v_\perp} & = \frac{|\bm{B}|}{v_\perp} + \frac{v_\parallel}{v_\perp}\,\bm{b}\cdot\nabla\times\bm{b} + \bm{a}\cdot\nabla\bm{e}_1\cdot\bm{e}_2.
\end{align}
Theorem \ref{hamiltonian_footpoint_thm} therefore implies that the Poisson bracket $\{\cdot,\cdot\}_\Sigma$ on the guiding center Poincar\'e section is given by,
\begin{align*}
\{f,g\}_\Sigma&=\bm{b}\cdot (\partial_{\bm{\mathsf{X}}}f\,\partial_{u}g - \partial_{\bm{\mathsf{X}}}g\,\partial_{u}f)+\bm{e}_1\cdot (\partial_{\bm{\mathsf{X}}}f\,\partial_{r}g - \partial_{\bm{\mathsf{X}}}g\,\partial_{r}f)\\
& + (r\,\bm{e}_1\cdot\nabla\bm{b}\cdot\bm{e}_1 + u\,\bm{b}\cdot\nabla\bm{b}\cdot\bm{e}_1)(\partial_{u}f\,\partial_{r}g-\partial_{u}g\,\partial_{r}f)\\
& +\bigg(\bm{b}\cdot (\partial_{\bm{\mathsf{X}}}J_\Sigma\,\partial_{u}f - \partial_{\bm{\mathsf{X}}}f\,\partial_{u}J_\Sigma)+\bm{e}_1\cdot (\partial_{\bm{\mathsf{X}}}J_\Sigma\,\partial_{r}f - \partial_{\bm{\mathsf{X}}}f\,\partial_{r}J_\Sigma)\\
& + (r\,\bm{e}_1\cdot\nabla\bm{b}\cdot\bm{e}_1 + u\,\bm{b}\cdot\nabla\bm{b}\cdot\bm{e}_1)(\partial_{u}J_\Sigma\,\partial_{r}f-\partial_{u}f\,\partial_{r}J_\Sigma)\bigg)\frac{N^{\bm{\mathsf{X}}}_\Sigma\cdot \partial_{\bm{\mathsf{X}}}g + N^u_{\Sigma}\,\partial_ug + N^r_\Sigma\,\partial_rg}{N^{\bm{\mathsf{X}}}_\Sigma\cdot \partial_{\bm{\mathsf{X}}}J_\Sigma + N^u_{\Sigma}\,\partial_uJ_\Sigma + N^r_\Sigma\,\partial_rJ_\Sigma}\\
& -\bigg(\bm{b}\cdot (\partial_{\bm{\mathsf{X}}}J_\Sigma\,\partial_{u}g - \partial_{\bm{\mathsf{X}}}g\,\partial_{u}J_\Sigma)+\bm{e}_1\cdot (\partial_{\bm{\mathsf{X}}}J_\Sigma\,\partial_{r}g - \partial_{\bm{\mathsf{X}}}g\,\partial_{r}J_\Sigma)\\
& + (r\,\bm{e}_1\cdot\nabla\bm{b}\cdot\bm{e}_1 + u\,\bm{b}\cdot\nabla\bm{b}\cdot\bm{e}_1)(\partial_{u}J_\Sigma\,\partial_{r}g-\partial_{u}g\,\partial_{r}J_\Sigma)\bigg)\frac{N^{\bm{\mathsf{X}}}_\Sigma\cdot \partial_{\bm{\mathsf{X}}}f + N^u_{\Sigma}\,\partial_uf + N^r_\Sigma\,\partial_rf}{N^{\bm{\mathsf{X}}}_\Sigma\cdot \partial_{\bm{\mathsf{X}}}J_\Sigma + N^u_{\Sigma}\,\partial_uJ_\Sigma + N^r_\Sigma\,\partial_rJ_\Sigma},
\end{align*}
where
\begin{align*}
N^{\bm{\mathsf{X}}}_\Sigma & =r^{-1}\bm{e}_2 \\
N^u_\Sigma & = r^{-1}(u\bm{b}+r\bm{e}_1)\cdot\nabla\bm{b}\cdot\bm{e}_2  - \bm{b}\cdot\nabla\times\bm{b}  + \bm{b}\cdot\nabla\bm{e}_1\cdot\bm{e}_2\\
N^r_\Sigma & = r^{-1}|\bm{B}| + r^{-1}u\,\bm{b}\cdot\nabla\times\bm{b} + \bm{e}_1\cdot\nabla\bm{e}_1\cdot\bm{e}_2.
\end{align*}
By Theorem \ref{hamiltonian_footpoint_thm}, the footpoint flow field for the Lorentz force is given by Hamilton's equations $\dot{\sigma}^i = \{\sigma^i,H_\Sigma\}_{\Sigma}$. The above explicit expression for $\{\cdot,\cdot\}_\Sigma$, together with $H_\Sigma = \frac{1}{2}u^2 + \frac{1}{2}r^2$, therefore implies the following explicit expressions for the components of the footpoint flow field. The $\bm{X}$-component is
\begin{align*}
\{\bm{\mathsf{X}},H_\Sigma\}_\Sigma&=u\,\bm{b}+r\,\bm{e}_1\\
& +\bigg(-\partial_{u}J_\Sigma\,\bm{b}-\partial_rJ_\Sigma\,\bm{e}_1\bigg)\frac{ N^u_{\Sigma}\,u + N^r_\Sigma\,r}{N^{\bm{\mathsf{X}}}_\Sigma\cdot \partial_{\bm{\mathsf{X}}}J_\Sigma + N^u_{\Sigma}\,\partial_uJ_\Sigma + N^r_\Sigma\,\partial_rJ_\Sigma}\\
& -\bigg((u\,\bm{b} + r\,\bm{e}_1)\cdot\partial_{\bm{\mathsf{X}}}J_\Sigma \\
& + (r\,\bm{e}_1\cdot\nabla\bm{b}\cdot\bm{e}_1 + u\,\bm{b}\cdot\nabla\bm{b}\cdot\bm{e}_1)(\partial_{u}J_\Sigma\,r-u\,\partial_{r}J_\Sigma)\bigg)\frac{N^{\bm{\mathsf{X}}}_\Sigma}{N^{\bm{\mathsf{X}}}_\Sigma\cdot \partial_{\bm{\mathsf{X}}}J_\Sigma + N^u_{\Sigma}\,\partial_uJ_\Sigma + N^r_\Sigma\,\partial_rJ_\Sigma},\\
& = u\,\bm{b}+r\,\bm{e}_1\\
&  - \frac{(\partial_{u}J_\Sigma\,\bm{b}+\partial_rJ_\Sigma\,\bm{e}_1)(N^u_{\Sigma}\,u + N^r_\Sigma\,r)  }{N^{\bm{\mathsf{X}}}_\Sigma\cdot \partial_{\bm{\mathsf{X}}}J_\Sigma + N^u_{\Sigma}\,\partial_uJ_\Sigma + N^r_\Sigma\,\partial_rJ_\Sigma}\\
&- \frac{ (u\,\bm{b} + r\,\bm{e}_1)\cdot\partial_{\bm{\mathsf{X}}}J_\Sigma\,N^{\bm{\mathsf{X}}}_\Sigma }{N^{\bm{\mathsf{X}}}_\Sigma\cdot \partial_{\bm{\mathsf{X}}}J_\Sigma + N^u_{\Sigma}\,\partial_uJ_\Sigma + N^r_\Sigma\,\partial_rJ_\Sigma}\\
&- \frac{ (r\,\bm{e}_1\cdot\nabla\bm{b}\cdot\bm{e}_1 + u\,\bm{b}\cdot\nabla\bm{b}\cdot\bm{e}_1)(\partial_{u}J_\Sigma\,r-u\,\partial_{r}J_\Sigma) N^{\bm{\mathsf{X}}}_\Sigma }{N^{\bm{\mathsf{X}}}_\Sigma\cdot \partial_{\bm{\mathsf{X}}}J_\Sigma + N^u_{\Sigma}\,\partial_uJ_\Sigma + N^r_\Sigma\,\partial_rJ_\Sigma}\\
& = u\,\bm{b}+r\,\bm{e}_1\\
&  - \frac{\partial_u J_\Sigma (u N^u_\Sigma + r\,N^r_\Sigma)  }{N^{\bm{\mathsf{X}}}_\Sigma\cdot \partial_{\bm{\mathsf{X}}}J_\Sigma + N^u_{\Sigma}\,\partial_uJ_\Sigma + N^r_\Sigma\,\partial_rJ_\Sigma}\,\bm{b}\\
&- \frac{ \partial_r J_\Sigma (u N^u_\Sigma + r\,N^r_\Sigma) }{N^{\bm{\mathsf{X}}}_\Sigma\cdot \partial_{\bm{\mathsf{X}}}J_\Sigma + N^u_{\Sigma}\,\partial_uJ_\Sigma + N^r_\Sigma\,\partial_rJ_\Sigma}\,\bm{e}_1\\
&- \frac{ (\frac{u}{r}\bm{b} + \bm{e}_1)\cdot\partial_{\bm{\mathsf{X}}}J_\Sigma + (\bm{e}_1\cdot\nabla\bm{b}\cdot\bm{e}_1 + \frac{u}{r}\,\bm{b}\cdot\nabla\bm{b}\cdot\bm{e}_1)(r\,\partial_uJ_\Sigma - u\,\partial_rJ_\Sigma) }{N^{\bm{\mathsf{X}}}_\Sigma\cdot \partial_{\bm{\mathsf{X}}}J_\Sigma + N^u_{\Sigma}\,\partial_uJ_\Sigma + N^r_\Sigma\,\partial_rJ_\Sigma}\,\bm{e}_2\\
& = u\,\bm{b}+r\,\bm{e}_1\\
&  - \frac{\partial_u J_\Sigma (|\bm{B}| + u\,\bm{b}\cdot\nabla\times\bm{b} + u\,\bm{e}_2\cdot\nabla\bm{b}\cdot \bm{e}_1 + \frac{u^2}{r}\bm{b}\cdot\nabla\bm{b}\cdot\bm{e}_2 + (u\bm{b} + r\,\bm{e}_1)\cdot\nabla\bm{e}_1\cdot\bm{e}_2 )  }{N^{\bm{\mathsf{X}}}_\Sigma\cdot \partial_{\bm{\mathsf{X}}}J_\Sigma + N^u_{\Sigma}\,\partial_uJ_\Sigma + N^r_\Sigma\,\partial_rJ_\Sigma}\,\bm{b}\\
&- \frac{ \partial_r J_\Sigma (|\bm{B}| + u\,\bm{b}\cdot\nabla\times\bm{b} + u\,\bm{e}_2\cdot\nabla\bm{b}\cdot \bm{e}_1 + \frac{u^2}{r}\bm{b}\cdot\nabla\bm{b}\cdot\bm{e}_2 + (u\bm{b} + r\,\bm{e}_1)\cdot\nabla\bm{e}_1\cdot\bm{e}_2 ) }{N^{\bm{\mathsf{X}}}_\Sigma\cdot \partial_{\bm{\mathsf{X}}}J_\Sigma + N^u_{\Sigma}\,\partial_uJ_\Sigma + N^r_\Sigma\,\partial_rJ_\Sigma}\,\bm{e}_1\\
&- \frac{ (\frac{u}{r}\bm{b} + \bm{e}_1)\cdot\partial_{\bm{\mathsf{X}}}J_\Sigma + (\bm{e}_1\cdot\nabla\bm{b}\cdot\bm{e}_1 + \frac{u}{r}\,\bm{b}\cdot\nabla\bm{b}\cdot\bm{e}_1)(r\,\partial_uJ_\Sigma - u\,\partial_rJ_\Sigma) }{N^{\bm{\mathsf{X}}}_\Sigma\cdot \partial_{\bm{\mathsf{X}}}J_\Sigma + N^u_{\Sigma}\,\partial_uJ_\Sigma + N^r_\Sigma\,\partial_rJ_\Sigma}\,\bm{e}_2,
\end{align*}
which reproduces Eqs.\,\eqref{Xdotb}-\eqref{XdotTwo} in the Theorem statement. The $u$-component is
\begin{align*}
\{u,H_\Sigma\}_\Sigma & =\\
& + r\,(r\,\bm{e}_1\cdot\nabla\bm{b}\cdot\bm{e}_1 + u\,\bm{b}\cdot\nabla\bm{b}\cdot\bm{e}_1)\\
& +\bigg(\bm{b}\cdot \partial_{\bm{\mathsf{X}}}J_\Sigma - (r\,\bm{e}_1\cdot\nabla\bm{b}\cdot\bm{e}_1 + u\,\bm{b}\cdot\nabla\bm{b}\cdot\bm{e}_1)\partial_{r}J_\Sigma\bigg)\frac{ N^u_{\Sigma}\,u + N^r_\Sigma\,r}{N^{\bm{\mathsf{X}}}_\Sigma\cdot \partial_{\bm{\mathsf{X}}}J_\Sigma + N^u_{\Sigma}\,\partial_uJ_\Sigma + N^r_\Sigma\,\partial_rJ_\Sigma}\\
& -\bigg((u\,\bm{b} + r\,\bm{e}_1)\cdot\partial_{\bm{\mathsf{X}}}J_\Sigma \\
& + (r\,\bm{e}_1\cdot\nabla\bm{b}\cdot\bm{e}_1 + u\,\bm{b}\cdot\nabla\bm{b}\cdot\bm{e}_1)(\partial_{u}J_\Sigma\,r-u\,\partial_{r}J_\Sigma)\bigg)\frac{ N^u_{\Sigma}}{N^{\bm{\mathsf{X}}}_\Sigma\cdot \partial_{\bm{\mathsf{X}}}J_\Sigma + N^u_{\Sigma}\,\partial_uJ_\Sigma + N^r_\Sigma\,\partial_rJ_\Sigma}\\
& = \frac{r\,(r\bm{e}_1\cdot\nabla\bm{b}\cdot\bm{e}_1 + u\bm{b}\cdot\nabla\bm{b}\cdot\bm{e}_1)\,N^{\bm{\mathsf{X}}}_\Sigma\cdot\partial_{\bm{\mathsf{X}}}J_\Sigma}{D}\\
& +\frac{r(N^r_\Sigma\,\bm{b} - N^u_\Sigma\,\bm{e}_1)\cdot \partial_{\bm{\mathsf{X}}}J_\Sigma}{D}\\
& =\frac{(|\bm{B}| + u\bm{b}\cdot\nabla\times\bm{b} + r\,\bm{e}_1\cdot\nabla\bm{e}_1\cdot\bm{e}_2)}{D}\bm{b}\cdot\partial_{\bm{\mathsf{X}}}J_\Sigma\\
& - \frac{(r\bm{e}_1\cdot\nabla\bm{b}\cdot\bm{e}_2 - r\bm{b}\cdot\nabla\times\bm{b} + u\bm{b}\cdot\nabla\bm{b}\cdot\bm{e}_2 + r\bm{b}\cdot\nabla\bm{e}_1\cdot\bm{e}_2)}{D}\bm{e}_1\cdot\partial_{\bm{\mathsf{X}}}J_\Sigma\\
&+\frac{(r\,\bm{e}_1\cdot\nabla\bm{b}\cdot\bm{e}_1 + u\bm{b}\cdot\nabla\bm{b}\cdot\bm{e}_1)}{D}\bm{e}_2\cdot\partial_{\bm{\mathsf{X}}}J_\Sigma\\
 & = \frac{\bm{B}+ (\bm{b}\cdot\nabla\times\bm{b})[u\,\bm{b} + r\,\bm{e}_1]+\bm{b}\times([u\bm{b} + r\bm{e}_1]\cdot\nabla\bm{b}) - r\,\bm{e}_2\times (\nabla\bm{e}_1\cdot\bm{e}_2)}{D}\cdot\partial_{\bm{\mathsf{X}}}J_\Sigma,
\end{align*}
which reproduces Eq.\,\eqref{udot_thm}. Finally, the $r$-component is
\begin{align*}
\{r,H_\Sigma\}_\Sigma&= -u(r\,\bm{e}_1\cdot\nabla\bm{b}\cdot\bm{e}_1 + u\,\bm{b}\cdot\nabla\bm{b}\cdot\bm{e}_1)\\
& +\bigg(\bm{e}_1\cdot (\partial_{\bm{\mathsf{X}}}J_\Sigma)\\
& + (r\,\bm{e}_1\cdot\nabla\bm{b}\cdot\bm{e}_1 + u\,\bm{b}\cdot\nabla\bm{b}\cdot\bm{e}_1)(\partial_{u}J_\Sigma)\bigg)\frac{ N^u_{\Sigma}\,u + N^r_\Sigma\,r}{N^{\bm{\mathsf{X}}}_\Sigma\cdot \partial_{\bm{\mathsf{X}}}J_\Sigma + N^u_{\Sigma}\,\partial_uJ_\Sigma + N^r_\Sigma\,\partial_rJ_\Sigma}\\
& -\bigg((u\,\bm{b} + r\,\bm{e}_1)\cdot\partial_{\bm{\mathsf{X}}}J_\Sigma \\
& + (r\,\bm{e}_1\cdot\nabla\bm{b}\cdot\bm{e}_1 + u\,\bm{b}\cdot\nabla\bm{b}\cdot\bm{e}_1)(\partial_{u}J_\Sigma\,r-u\,\partial_{r}J_\Sigma)\bigg)\frac{ N^r_\Sigma}{N^{\bm{\mathsf{X}}}_\Sigma\cdot \partial_{\bm{\mathsf{X}}}J_\Sigma + N^u_{\Sigma}\,\partial_uJ_\Sigma + N^r_\Sigma\,\partial_rJ_\Sigma}\\
&= -\frac{u\,(r\,\bm{e}_1\cdot\nabla\bm{b}\cdot\bm{e}_1 + u\,\bm{b}\cdot\nabla\bm{b}\cdot\bm{e}_1)\,N^{\bm{\mathsf{X}}}_\Sigma}{D}\cdot \partial_{\bm{\mathsf{X}}}J_\Sigma \\
& + \frac{u\,N^u_\Sigma\,\bm{e}_1 - u\,N^r_\Sigma\,\bm{b}}{D}\cdot\partial_{\bm{\mathsf{X}}}J_\Sigma\\
&= \frac{-\frac{u}{r}\bm{b}\times([u\bm{b}+r\bm{e}_1]\cdot\nabla\bm{b}) - \frac{u}{r}[u\bm{b}+r\bm{e}_1](\bm{b}\cdot\nabla\times\bm{b}) + u\bm{e}_2\times(\nabla\bm{e}_1\cdot\bm{e}_2)}{D}\cdot\partial_{\bm{\mathsf{X}}}J_\Sigma,
\end{align*}
which reproduced Eq.\,\eqref{rdot_thm} from the Theorem statement.

\end{proof}

\subsection{Derivative estimation}
This Section describes the method we used to estimate ``ground truth" time derivatives of (presumptive) continuous-time trajectories on $\Sigma$ that interpolate iterates of the Poincar\'e map. To provide good input data to estimate $\mathcal{J}$, we need to be confident that we can both trust that a point $(x,y,r)$ is integrable and that we have a good estimate of the derivative $(\dot{x},\dot{y})$.
For both of these tasks, we leverage the Birkhoff Reduced Rank Extrapolation (Birkhoff RRE) algorithm \cite{Ruth:2024} implemented in the \texttt{SymplecticMapTools.jl} Julia package. 
Using a single trajectory from the Poincar\'e map, here denoted $F$, Birkhoff RRE first classifies that trajectory as an invariant circle, an island, or chaos.
Then, assuming the trajectory is an invariant circle with Diophantine rotation number $\omega$, it returns an approximate parameterization of an invariant circle $z : \mathbb{T} \to \Sigma$ and $\omega$ so that $F(z(\theta)) = z(\theta+\omega)$.
Clearly, the adiabatic invariant $\mathcal{J}$ should be constant on $z$, meaning that $\dot{z} \cdot \nabla \mathcal{J} = 0$, giving the correct direction of the derivatives of $\dot{z}$ in 2D. 

To obtain the magnitude of of the derivatives, we first observe that the dynamics predicted by the non-perturbative guiding center equations of motion on the invariant circle will have the form $(\dot{x},\dot{y}) = ({d z}/{d\theta})\dot{\theta}$. 
To estimate $\dot{\theta}$, we note it must satisfy a differential equation of the form
\begin{equation*}
    \dot{\theta} = f(\theta), \qquad \theta(t+T(\theta(t))) = \theta(t)+\omega,
\end{equation*}
where $T(\theta(t))$ is the gyroperiod at the point $z(\theta(t))$. 
We can lift and invert the dynamics to find
\begin{equation*}
    \frac{dt}{d\theta} = g(t), \qquad t(\theta+\omega) = t(\theta)+T(\theta). 
\end{equation*}
For Diophantine $\omega$ and smooth enough $T$, we can solve the right expression for $t$ to find
\begin{equation*}
    t(\theta) = c + t_0 \theta + \sideset{}{'}\sum_{n=-N}^N t_n e^{i n \theta},
\end{equation*}
where $c$ is a constant determined by initial conditions and 
\begin{equation*}
    t_n = \begin{cases} \frac{1}{\omega} \langle T \rangle & n=0 \\ \frac{1}{e^{in\omega}-1} \langle T(\cdot) e^{-i n \cdot} \rangle & n\neq 0 \end{cases}, \quad \langle f \rangle = \frac{1}{2\pi} \int_{\mathbb{T}} f(\theta) \, d\theta.
\end{equation*}
Using a discrete Fourier transform to compute the above quantities and $\dot{\theta} = ({dt}/{d\theta})^{-1}$, we have an approximation of $(\dot{x},\dot{y})$.


\bibliographystyle{unsrt}

\providecommand{\noopsort}[1]{}\providecommand{\singleletter}[1]{#1}%


\end{document}